\documentclass[a4paper]{article}
\textheight23.3cm \textwidth15.2cm \topmargin-1.8cm
\oddsidemargin.1cm \evensidemargin.1cm
\usepackage{microtype}

\usepackage{xspace,url}
\usepackage{array,paralist}
\usepackage{color}
\usepackage{amsmath,amssymb,latexsym}

\usepackage{amsmath}
\usepackage{amssymb}
\usepackage{amsthm}
\newtheorem{lemma}{Lemma}

\newtheorem{example}{Example}
\newtheorem{definition}{Definition}
\newtheorem{theorem}{Theorem}
\newtheorem{proposition}{Proposition}

\newcommand{\set}[1]{\left\{#1\right\}}

\newcommand{\simmed}[2]{#1^{#2}}

\newcommand{\etal}{\emph{et al.}\xspace}

\usepackage[ruled,lined,linesnumbered,boxed,vlined]{algorithm2e}


\title{On the Workflow Satisfiability Problem with Class-Independent Constraints}

\author{Jason Crampton,
Andrei Gagarin, 
Gregory Gutin,
Mark Jones, \and
Magnus Wahlstr\"om\\
{\footnotesize Royal Holloway, University of London, Egham, Surrey, TW20 0EX, UK}
}

\begin{document}
\maketitle

\begin{abstract}
A workflow specification defines sets of steps and users.
An authorization policy determines for each user a subset of steps the user is allowed to perform.  
Other security requirements, such as separation-of-duty, impose constraints on which subsets of users may perform certain subsets of steps.
The \emph{workflow satisfiability problem} (WSP) is the problem of determining whether there exists an assignment of users to workflow steps that satisfies all such authorizations and constraints.
An algorithm for solving WSP is important, both as a static analysis tool for workflow specifications, and for the construction of run-time reference monitors for workflow management systems.
Given the computational difficulty of WSP, it is important, particularly for the second application, that such algorithms are as efficient as possible.

We introduce class-independent constraints, enabling us to model scenarios where the set of users is partitioned into groups, and the identities of the user groups are irrelevant to the satisfaction of the constraint. We prove that solving WSP is fixed-parameter tractable (FPT) for this class of constraints and develop an FPT algorithm that is useful in practice. We compare the performance of the FPT algorithm with that of SAT4J (a pseudo-Boolean SAT solver) in computational experiments, which show that our algorithm significantly outperforms SAT4J for many instances of WSP. User-independent constraints, a large class of constraints including many practical ones, are a special case of class-independent constraints for which WSP was proved to be FPT (Cohen {\em et al.}, J. Artif. Intel. Res. 2014). Thus our results considerably extend our knowledge of the fixed-parameter tractability of WSP.\end{abstract}

\section{Introduction}

It is increasingly common for organizations to computerize their business and management processes.
The co-ordination of the tasks or steps that comprise a computerized business process is managed by a workflow management system (or business process management system).
Typically, the execution of these steps will be triggered by a human user, or a software agent acting under the control of a human user, and each step may only be executed by an \emph{authorized} user.
Thus a workflow specification will include an authorization policy defining which users are authorized to perform which steps.

In addition, many workflows require controls on the users that perform certain sets of steps~\cite{ansi-rbac04,BaBuKa14,BrNa89,Cr05,WaLi10}.
%
Consider a simple purchase-order system in which there are four steps: raise-order ($s_1$), acknowledge-receipt-of-goods ($s_2$), raise-invoice ($s_3$), and send-payment ($s_4$).  
%
The workflow specification for the purchase-order system includes rules to prevent fraudulent use of the system, the rules taking the form of \emph{constraints} on users that can perform pairs of steps in the workflow: 
the same user may not raise the invoice ($s_3$) and sign for the goods ($s_2$), for example.
Such a constraint is known as a \emph{user-independent} (UI) constraint, since the specific identities of the users that perform these steps are not important, only the relationship between them (in this example, the identities must be different).

Once we introduce constraints on the execution of workflow steps, it may be impossible to find a \emph{valid plan}~--~an assignment of authorized users to workflow steps such that all constraints are satisfied.
The {\sc Workflow Satisfiability Problem} (WSP) takes a workflow specification as input and outputs a valid plan if one exists. 
WSP is known to be NP-hard, even when the set of constraints only includes constraints having a relatively simple structure (and arising regularly in practice).
In particular, the {\sc Graph $k$-Colorability} problem can be reduced to a special case of WSP in which the workflow specification only includes separation-of-duty constraints~\cite{WaLi10}.
Clearly, it is important to be able to determine whether a workflow specification is satisfiable at design time.
Equally, when users select steps to execute in a workflow instance, it is essential that the access control mechanism can determine whether
\begin{inparaenum}[(a)]
 \item the user is authorized,
 \item allowing the user to execute the step would render the instance unsatisfiable. 
\end{inparaenum}
Thus, the access control mechanism must incorporate an algorithm to solve WSP, and that algorithm needs to be as efficient as possible.

Wang and Li~\cite{WaLi10} observed that, in practice, the number $k$ of steps in a workflow will be small, relative to the size of the input to WSP; specifically, the number of users is likely to be an order of magnitude greater than the number of steps.
This observation led them to set $k$ as the parameter and to study the problem using tools from parameterized complexity. 
In doing so, they proved that the problem is {\em fixed-parameter tractable} (FPT) for simple classes of constraints.
%
However, Wang and Li also showed that for many types of constraints the problem is fixed-parameter \emph{intractable} (unless $\text{FPT}\neq\text{W[1]}$ is false).
Hence, it is important to be able to identify those types of practical constraints for which WSP is FPT.

Recent research has made significant progress in understanding the fixed-parameter tractability of WSP.
In particular, Cohen \etal~\cite{CoCrGaGuJo14}  introduced the notion of patterns and, using it, proved that WSP is FPT (irrespective of the authorization policy) if all constraints in the specification are UI. This result is significant because most constraints in the literature~--~including separation-of-duty, cardinality and counting constraints~--~are UI~\cite{CoCrGaGuJo14}. Using a modified pattern approach, Karapetyan \etal~\cite{KaGaGu} provided both a short proof that WSP with only UI constraints is FPT and a very efficient algorithm for WSP with UI constraints. 

However, it is known that not all constraints that may be useful in practice are UI. 
Consider a situation where the set of users is partitioned into groups (such as departments or teams) and we wish to define constraints on the groups, rather than users.
In our purchase order example, suppose each user belongs to a specific department.
Then it would be reasonable to require that steps $s_1$ and $s_2$ are performed by different users belonging to the same department.
There is little work in the literature on constraints of this form, although prior work has recognized that such constraints are likely to be important in practice~\cite{Cr05,WaLi10}, and 
it has been shown 
that such constraints present additional difficulties when incorporated into WSP~\cite{CrGuYe13}.

In this paper, we extend the notion of a UI constraint to that of a \emph{class-independent} (CI) constraint.
In particular, every UI constraint is an instance of a CI constraint.
Our second contribution is to demonstrate that patterns for UI constraints~\cite{CoCrGaGuJo14} can be generalized to patterns for CI constraints, as well as to ``nested'' CI constraints in several levels. 
The resulting algorithm, using these new patterns, remains FPT (irrespective of the authorization policy), although its running time is somewhat slower than that of the algorithm for WSP with UI constraints only.
In short, our first two contributions identify a large class of constraints for which WSP is shown to be FPT, and subsume prior work in this area~\cite{CrGuYe13,CoCrGaGuJo14,WaLi10}.
Our final contribution is an implementation of our algorithm in order to investigate whether the theoretical advantages implied by its fixed-parameter tractability can be realized in practice. We compare our FPT algorithm with SAT4J, an off-the-shelf pseudo-Boolean (PB) SAT solver.
The results of our experiments suggest that our FPT algorithm enjoys some significant advantages over SAT4J for hard instances of WSP.

In the next section, we define WSP and UI constraints in more formal terms, discuss related work in more detail, and introduce the notion of class-independent constraints.
In Sections~\ref{sec:pp} and \ref{sec:real}, we state and prove a number of technical results that underpin the algorithm for solving WSP with class-independent constraints. We describe the algorithm and establish its worst-case complexity in Section \ref{sec:main}.  In Section~\ref{sec:nested}, we describe the generalisations to several levels of nested CI constraints, and analyse the resulting running time more carefully.  In Section~\ref{sec:experiments}, we describe our experimental methods and report the results of our experiments.  We conclude in Section \ref{sec:con}.

In the main part of the paper, we focus on the case of a single non-trivial partition of the user set. 
The treatment of the case with nested CI constraints -- i.e., multiple nested partitions of the user set -- is confined to Section~\ref{sec:nested}. 
(Nested CI constraints can be used to model hierarchical organizational structures, which can be useful in practice~\cite{CrGuYe13}.)

\section{Workflow Satisfiability}

Let $S = \{s_1, \dots, s_k\}$ be a set of \emph{steps}, let $U = \{u_1, \dots, u_n\}$ be a set of \emph{users} in a workflow specification, and let $k\le n.$
We are interested in assigning users to steps subject to certain constraints.
In other words, among the set $\Pi(S,U)$ of 
functions from $S$ to $U$, there are some that represent ``legitimate'' assignments of steps to users 
and some that do not.

The legitimacy or otherwise of an assignment is determined by the authorization policy and the constraints that complete the workflow specification.
Let ${\cal A} = \{A(u): u \in U\}$ be a set of \emph{authorization lists}, where $A(u)\subseteq S$ for each $u\in U$, and let $C$ be a set of \emph{(workflow) constraints}.
A \emph{constraint} $c \in C$ may be viewed as a pair $(T, \Theta)$, where $T \subseteq S$ is the \emph{scope} of $c$ and $\Theta$ is a set of functions from $T$ to $U$, specifying the assignments of steps in $T$ to users in $U$ that satisfy the constraint.
In practice, we do not enumerate all the elements of $\Theta$.  Instead, we define its members implicitly using some constraint-specific syntax.  In particular, we write $(s,s',\rho)$, where $s,s' \in S$ and $\rho$ is a binary relation defined on $U$, to denote a constraint that has scope $\set{s,s'}$ and is satisfied by any plan $\pi : S \rightarrow U$ such that $(\pi(s),\pi(s')) \in \rho$.
Thus $(s,s',\ne)$, for example, requires $s$ and $s'$ to be performed by different users (and so represents a separation-of-duty constraint). Also $(s,s',=)$ states that $s$ and
$s'$ must be performed by the same user (a binding-of-duty constraint).

\subsection{The Workflow Satisfiability Problem}

A \emph{plan} is a function in $\Pi(S,U)$.
Given a \emph{workflow} $W = (S,U, {\cal A}, C)$, a plan $\pi$ is \emph{authorized} if for all $s \in S$, $s \in A(\pi(s))$, i.e. the user assigned to $s$ is authorized for $s$.
A plan $\pi$ is \emph{eligible} if for all $(T, \Theta) \in C$, $\pi|_{T}  \in \Theta$, i.e. every constraint is satisfied.
A plan $\pi$ is \emph{valid} if it is both authorized and eligible. 
In the \emph{workflow satisfiability problem (WSP)}, we are given a workflow (specification) $W$, and our aim is to decide whether $W$ has a valid plan.
If $W$ has a valid plan, $W$ is \emph{satisfiable}; otherwise, $W$ is \emph{unsatisfiable}.

Note that WSP is, in fact, the conservative CSP (i.e., CSP with unary constraints corresponding to step authorizations in the WSP terminology). However, unlike a typical instance of CSP, where the number of variables is significantly larger than the number of values, a typical instance of WSP has many more values (i.e., users) than variables (i.e., steps). 

We assume that in all instances of WSP we consider, all constraints can be checked in time polynomial in $n$.
Thus it takes polynomial time to check whether any plan is  eligible. The correctness of our algorithm is unaffected by this assumption, but using constraints not checkable in polynomial time would naturally affect the running time.

\begin{example}\label{ex1} {\rm Consider the following instance $W'$ of WSP\@.
The step and user sets are $S = \{ s_1, s_2, s_3, s_4 \}$ and $U = \{u_1,u_2,u_3,u_4, u_5\}$. 
The authorization lists are $A(u_1) = \{ s_1, s_2, s_3, s_4 \}$, $A(u_2) = \{ s_1 \}$, $A(u_3) = \{ s_2 \}$, $A(u_4) = A(u_5) = \{ s_3, s_4 \}$. 
The constraints are $(s_1,s_2,=)$, $(s_2, s_3,\neq)$,  $(s_3, s_4,\neq)$, and $(s_4,s_1,\neq)$. Observe that $\pi':\ S \rightarrow U$ with $\pi'(s_1)=\pi'(s_2)=u_1$, $\pi'(s_3)=u_5$ and $\pi'(s_4)=u_4$ satisfies all constraints and authorizations, and thus $\pi'$ is a valid plan for $W'$. Therefore, $W'$ is satisfiable.}
\end{example}

\subsection{Constraints using Equivalence Relations}

Crampton \etal~\cite{CrGuYe13} introduced constraints defined in terms of an equivalence relation $\sim$ on $U$: a plan $\pi$ satisfies constraint $(s,s',\sim)$ if $\pi(s) \sim \pi(s')$  (and satisfies constraint $(s,s',\nsim)$ if $\pi(s) \nsim \pi(s')$).
Hence, we could, for example, specify the pair of constraints $(s,s',\ne)$ and $(s,s',\sim)$, which, collectively, require that $s$ and $s'$ are performed by different users that belong to the same equivalence class.
As we noted in the introduction, such constraints are very natural in the context of organizations that partition the set of users into departments, groups or teams.

Moreover, Crampton \etal~\cite{CrGuYe13} demonstrated that ``nested'' equivalence relations can be used to model hierarchical structures within an organization\footnote{Many
    organizations exhibit nested hierarchical structure. For example, the
    academic parts of many universities are divided into faculties/schools
    which are divided into departments.}
 and to define constraints on workflow execution with respect to those structures.
More formally, an equivalence relation $\sim$ is said to be a \emph{refinement} of an equivalence relation $\approx$ if $x \sim y$ implies $x \approx y$.
In particular, given an equivalence relation $\sim$, $=$ is a refinement of $\sim$.
Crampton \etal proved that WSP remains FPT when some simple extensions of constraints $(s,s',\sim)$ and $(s,s',\nsim)$ are included~\cite[Theorem 5.4]{CrGuYe13}.
Our extension of constraints $(s,s',\sim)$ and $(s,s',\nsim)$ is much more general: it is similar to generalizing simple constraints $(s,s',=)$ and $(s,s',\neq)$ to the wide class of UI constraints. This leads, in particular, to a significant generalization of Theorem 5.4 in \cite{CrGuYe13}.


 Let $c = (T,\Theta)$ be a constraint and let $\sim$ be an equivalence relation on $U$.
 Let $\simmed{U}{\sim}$ denote the set of equivalence classes induced by $\sim$ and let $\simmed{u}{\sim} \in \simmed{U}{\sim}$ denote the equivalence class containing $u$.
 Then, for any function $\pi : S \rightarrow U$, we may define the function $\simmed{\pi}{\sim} : S \rightarrow \simmed{U}{\sim}$, where $\simmed{\pi}{\sim}(s) = \simmed{(\pi(s))}{\sim}$.
 In particular, $\sim$ induces a set of functions $\simmed{\Theta}{\sim} = \set{\simmed{\theta}{\sim}: \theta \in \Theta}$.

\begin{example}\label{ex2} {\rm Continuing from Example \ref{ex1}, suppose $U^{\sim}$ consists of two equivalence classes $U_1=\{u_1,u_2,u_5\}$ and $U_2=\{u_3,u_4\}$. Let us add to $W'$ another constraint $(s_1,s_4,\sim)$ ($s_1$ and $s_4$ must be assigned users from the same equivalence class) to form a new instance $W''$ of WSP. Then plan $\pi'$ does not satisfy the added constraint and so $\pi'$ is not valid for $W''$. However, $\pi'':\ S \rightarrow U$ with $\pi''(s_1)=\pi''(s_2)=u_1$, $\pi''(s_3)=u_4$ and $\pi''(s_4)=u_5$ satisfies all constraints and authorizations, and thus $\pi''$ is valid for $W''$. Here $(\pi'')^{\sim}(s_1)=(\pi'')^{\sim}(s_2)=(\pi'')^{\sim}(s_4)=U_1$ and $(\pi'')^{\sim}(s_3)=U_2.$}
\end{example}
 
Given an equivalence relation $\sim$ on $U$, we say that a constraint $c = (T, \Theta)$ is \emph{class-independent (CI) for $\sim$} if 
$\simmed{\theta}{\sim} \in \simmed{\Theta}{\sim}$ implies $\theta \in \Theta$, and 
for any permutation $\phi: \simmed{U}{\sim} \rightarrow \simmed{U}{\sim}$, 
 $\simmed{\theta}{\sim} \in \simmed{\Theta}{\sim}$ implies $\phi \circ \simmed{\theta}{\sim} \in \simmed{\Theta}{\sim}$.
In other words, if a plan $\pi:\ s\mapsto \pi(s)$ satisfies a constraint $c$, which is class-independent for $\sim$, then for each permutation $\phi$ of classes in $\simmed{U}{\sim}$, if  we replace $\pi(s)$ by any user in the class $\phi(\pi(s)^{\sim})$ for every step $s$, then the new plan will satisfy $c$.

 We say a constraint is \emph{user-independent (UI)} if it is CI for $=$.
In other words, if a plan $\pi:\ s\mapsto \pi(s)$ satisfies a UI constraint $c$ and we replace any user in $\{\pi(s):\ s\in S\}$ by an arbitrary user such that the replacement users are all distinct, then the new plan satisfies $c$. 

We conclude this section with a  claim whose simple proof is omitted.

 \begin{proposition}\label{pro:plan-coherence}
 {\em  Given two equivalence relations $\sim$ and $\approx$ such that $\sim$ is a refinement of $\approx$, and any plan $\pi : S \rightarrow U$, $\simmed{\pi}{\sim}(s) = \simmed{\pi}{\sim}(s')$ implies $\simmed{\pi}{\approx}(s) = \simmed{\pi}{\approx}(s')$.}
 \end{proposition}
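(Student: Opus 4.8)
The plan is to unwind the definitions of the induced maps $\simmed{\pi}{\sim}$ and $\simmed{\pi}{\approx}$ and then reduce the statement to the defining property of a refinement. No structural induction or case analysis is needed; the whole argument is a translation between ``equality of equivalence classes'' and ``the underlying relation holds''.

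First I would record the elementary fact that, for any equivalence relation $\sim$ on $U$ and any $u,v\in U$, we have $\simmed{u}{\sim}=\simmed{v}{\sim}$ if and only if $u\sim v$. Applying this with $u=\pi(s)$ and $v=\pi(s')$, and using the definition $\simmed{\pi}{\sim}(s)=\simmed{(\pi(s))}{\sim}$, the hypothesis $\simmed{\pi}{\sim}(s)=\simmed{\pi}{\sim}(s')$ is seen to be equivalent to $\pi(s)\sim\pi(s')$.

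Next I would invoke the assumption that $\sim$ is a refinement of $\approx$, i.e., that $x\sim y$ implies $x\approx y$ for all $x,y\in U$. Taking $x=\pi(s)$ and $y=\pi(s')$ gives $\pi(s)\approx\pi(s')$, and applying the class-equality characterisation to $\approx$ yields $\simmed{(\pi(s))}{\approx}=\simmed{(\pi(s'))}{\approx}$, that is, $\simmed{\pi}{\approx}(s)=\simmed{\pi}{\approx}(s')$, which is the desired conclusion.

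There is no real obstacle here; the only point requiring a little care is not to conflate an equivalence class (an element of $\simmed{U}{\sim}$) with a representative of it, and to keep clear that ``$\simmed{\pi}{\sim}(s)=\simmed{\pi}{\sim}(s')$'' is literally an equality of \emph{sets} (blocks of the partition) that happens to be equivalent to the relational statement $\pi(s)\sim\pi(s')$. Once this equivalence is used to pass from $\sim$ to its relational form, the refinement hypothesis, and then the equivalence is used again in the $\approx$ direction, the proof is complete; nothing about the step set $S$ or the plan $\pi$ beyond its being a function is used.
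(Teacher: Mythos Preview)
Your argument is correct: the equivalence $\simmed{u}{\sim}=\simmed{v}{\sim}\Leftrightarrow u\sim v$, the refinement hypothesis, and the same equivalence for $\approx$ are exactly what is needed. The paper omits the proof entirely (it merely states that the ``simple proof is omitted''), so there is nothing to compare against; your unwinding of the definitions is the natural argument.
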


\section{Plans and Patterns}\label{sec:pp}

In what follows, unless specified otherwise, we will consider the equivalence relation $=$ along with another fixed equivalence relation $\sim$.
We will write $[m]$ to denote the set $\set{1,\dots,m}$ for any integer $m \geqslant 1$.
For brevity and simplicity of presentation, we assume for now that all constraints are either UI or CI for $\sim$ (i.e., we consider only two equivalence relations $=$ and $\sim$);
we will refer to constraints that are CI for $\sim$ as simply \emph{CI}.
In Section~\ref{sec:nested}, we generalise our results 
to any sequence $\sim_1, \dots, \sim_l$ of equivalence relations such that $\sim_{i+1}$ is a refinement of $\sim_i$ for all $i \in [l-1]$. 
It is important to keep in mind that we put {\em no restrictions on authorizations}.

We will represent groups of plans as \emph{patterns}.
The intuition is that a pattern defines a partition of the set of steps relevant to a set of constraints. 
For instance, suppose that we only have UI constraints.
Then a pattern specifies which sets of steps are to be assigned to the same user.
A pattern assigns an integer to each step and those steps that are labelled by the same integer will be mapped to the same user.
A pattern $p$ defines an equivalence relation $\sim_p$ on the set of steps (where $s \sim_p s'$ if and only if $s$ and $s'$ are assigned the same label).
Moreover, this pattern can be used to define a plan by mapping each of the equivalence classes induced by $\sim_p$ to a different user.
Since we only consider UI constraints, the identities of the users are irrelevant (provided they are distinct).
Conversely, any plan $\pi : S \rightarrow U$ defines a pattern: $s$ and $s'$ are labelled with the same integer if and only if $\pi(s) = \pi(s')$.
And if $\pi$ satisfies a UI constraint $c$, then any other plan with the same pattern will also satisfy $c$.
We can extend this notion of a pattern to CI constraints where entries in the pattern encode equivalence classes of users instead of single users.

More formally, let  $W = (S,U, {\cal A}, C = C_{=} \cup C_{\sim})$ be a workflow, 
where $C_{=}$ is a set of UI constraints and $C_{\sim}$ is a set of CI constraints.
Let $p_{=} = (x_1, \dots, x_k)$ where $x_i \in [k]$ for all $i \in [k]$.
We say that $p_{=}$ is a \emph{UI-pattern} for a plan $\pi$ if $x_i = x_j \Leftrightarrow \pi(s_i)=\pi(s_j)$, for all $i,j \in [k]$, and $p_{=}$ is \emph{eligible for $C_{=}$} if 
any plan $\pi$ with $p_{=}$ as its UI-pattern is eligible for $C_{=}$.

In Example \ref{ex2}, $C_{=}=\{(s_1,s_2,=),(s_2,s_3,\neq),(s_3,s_4,\neq), (s_1,s_4,\neq)\}$ and \mbox{$C_{\sim}=\{(s_1,s_4,\sim)\}.$}
Tuples $(1,1,2,3)$ and $(2,2,4,3)$ are UI-patterns for plan $\pi''$ of Example \ref{ex2}. 

\begin{proposition}\label{prop:ui}
{\em Let $p_{=}$ be a UI-pattern for a plan $\pi$. Then $p_{=}$ is eligible for $C_{=}$ if and only if $\pi$ is eligible for $C_{=}$. }
\end{proposition}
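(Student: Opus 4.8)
The plan is to dispatch the forward implication by definition and to concentrate on the converse. If $p_{=}$ is eligible for $C_{=}$ then, by the very definition of eligibility of a pattern, every plan whose UI-pattern is $p_{=}$ is eligible for $C_{=}$; since $\pi$ is such a plan, $\pi$ is eligible for $C_{=}$. Nothing more is needed in this direction.

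For the converse I would argue as follows. Assume $\pi$ is eligible for $C_{=}$ and let $\pi'$ be an arbitrary plan having $p_{=}$ as its UI-pattern; I must show $\pi'$ is eligible for $C_{=}$. The first step is to observe that $\pi$ and $\pi'$ induce the same ``coincidence structure'' on the steps: for all $i,j\in[k]$ we have $\pi(s_i)=\pi(s_j)\Leftrightarrow x_i=x_j\Leftrightarrow\pi'(s_i)=\pi'(s_j)$. Hence the assignment $\pi(s_i)\mapsto\pi'(s_i)$ is a well-defined bijection between the sets $\{\pi(s_i):i\in[k]\}$ and $\{\pi'(s_i):i\in[k]\}$ of used users; since $U$ is finite, I would extend it to a permutation $\psi$ of $U$, so that $\pi'=\psi\circ\pi$ by construction.

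The final step is to invoke user-independence of each $c=(T,\Theta)\in C_{=}$. Since the classes of $=$ are singletons, identifying $\simmed{U}{=}$ with $U$ turns a permutation of $\simmed{U}{=}$ into a permutation of $U$ and identifies $\simmed{\theta}{=}$ with $\theta$; under this identification the condition ``$c$ is CI for $=$'' unwinds to: $\theta\in\Theta$ implies $\psi\circ\theta\in\Theta$ for every permutation $\psi$ of $U$. Applying this with $\theta=\pi|_T\in\Theta$ (which holds because $\pi$ is eligible for $C_{=}$) and the permutation $\psi$ constructed above yields $\pi'|_T=\psi\circ(\pi|_T)\in\Theta$. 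As $c$ was arbitrary, $\pi'$ is eligible for $C_{=}$, which is exactly the assertion that $p_{=}$ is eligible for $C_{=}$.

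I expect the only friction here to be bureaucratic rather than mathematical: carefully translating the formal definition of ``CI for $=$'' into the usable closure property (that $\Theta$ is stable under post-composition with permutations of $U$), and recording the elementary fact that a partial injection of a finite set extends to a permutation. Neither is a genuine obstacle, and no hypothesis on authorizations is used.
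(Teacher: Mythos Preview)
Your proof is correct and follows essentially the same route as the paper: both dispatch the implication $p_{=}$ eligible $\Rightarrow$ $\pi$ eligible by definition, and for the substantive direction both take an arbitrary plan $\pi'$ (the paper's $\pi_0$) with the same UI-pattern, use the shared coincidence structure to build a permutation of $U$ carrying $\pi$ to $\pi'$, and invoke user-independence. If anything, your treatment of the permutation extension is slightly more careful than the paper's, which defines $\phi(u)=u$ on users outside the image of $\pi$ without checking injectivity.
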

\begin{proof}
  Suppose that $\pi$ is eligible for $C_{=}$. We show that $\pi_0$ is eligible for $C_{=}$, for any plan $\pi_0$ that has $p_{=}$ as its UI-pattern, and so $p_{=}$ is eligible for $C_{=}$.
 
 Let $p_{=} = (x_1, \dots, x_k)$.
 Observe that for any $s_i,s_j$, we have $\pi(s_i) = \pi(s_j) \Leftrightarrow x_i = x_j \Leftrightarrow \pi_0(s_i) = \pi_0(s_j)$.
 Then define a permutation $\phi: U \rightarrow U$ as follows: $\phi(u) = \pi_0(s_i)$ if there exists $s_i \in S$ such that $\pi(s_i) = u$, and $\phi(u) = u$ otherwise. 
 As $\pi_0(s_i) = \pi_0(s_j)$ for any $s_i,s_j$ such that $\pi(s_i)=\pi(s_j)=u$, $\phi$ is well-defined.
 Furthermore $\pi_0 = \phi \circ \pi$. 
 Then it follows from the definition of a user-independent constraint that for any $c = (T,\Theta) \in C_{=}$, $\pi|_{T} \in \Theta \Leftrightarrow {\pi_0}|_{T} \in \Theta$.
 It follows that as $\pi$ satisfies every constraint in $C_{=}$, $\pi_0$ satisfies every constraint in $C_{=}$ and so $\pi_0$ is eligible for $C_{=}$, as required.
 
 For the converse, it follows by definition that if $p_{=}$ is eligible for $C_{=}$ then $\pi$ is eligible for $C_{=}$.
\end{proof}

Let $p_{\sim} = (y_1, \dots, y_k)$, where $y_i \in [k]$ for all $i \in [k]$.
We say that $p_{\sim}$ is a \emph{CI-pattern} for a plan $\pi$ if $y_i=y_j \Leftrightarrow \simmed{\pi}{\sim}(s_i) = \simmed{\pi}{\sim}(s_j)$, for all $i,j \in [k]$, and
$p_{\sim}$ is \emph{eligible for $C_{\sim}$} if 
any plan $\pi$ with $p_{\sim}$ as its CI-pattern is eligible for $C_{\sim}$.
For example, $(1,1,2,1)$ and $(2,2,4,2)$ are CI-patterns for plan $\pi''$ of Example \ref{ex2}. The next result is a generalization of Proposition \ref{prop:ui}.

\begin{proposition}\label{prop:ci}
{\em Let $p_{\sim}$ be a CI-pattern for a plan $\pi$. Then $p_{\sim}$ is eligible for $C_{\sim}$ if and only if $\pi$ is eligible for $C_{\sim}$. }
\end{proposition}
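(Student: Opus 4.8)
The plan is to follow the proof of Proposition~\ref{prop:ui} closely, but working with equivalence classes of users in place of users themselves. As there, only one direction needs work: I will show that if $\pi$ is eligible for $C_{\sim}$ then every plan $\pi_0$ having $p_{\sim}$ as its CI-pattern is eligible for $C_{\sim}$, which gives that $p_{\sim}$ is eligible for $C_{\sim}$. The converse is immediate, since $\pi$ itself has $p_{\sim}$ as its CI-pattern, so eligibility of $p_{\sim}$ forces eligibility of $\pi$.

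So fix a plan $\pi_0$ with CI-pattern $p_{\sim} = (y_1, \dots, y_k)$. The first step is to record that, for all $i, j \in [k]$, $\simmed{\pi}{\sim}(s_i) = \simmed{\pi}{\sim}(s_j) \Leftrightarrow y_i = y_j \Leftrightarrow \simmed{\pi_0}{\sim}(s_i) = \simmed{\pi_0}{\sim}(s_j)$, using that $p_{\sim}$ is a CI-pattern for both $\pi$ and $\pi_0$. The second step is to build a permutation $\phi : \simmed{U}{\sim} \rightarrow \simmed{U}{\sim}$ with $\simmed{\pi_0}{\sim} = \phi \circ \simmed{\pi}{\sim}$. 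Define $\phi$ on the classes in the image of $\simmed{\pi}{\sim}$ by $\phi\big(\simmed{\pi}{\sim}(s_i)\big) = \simmed{\pi_0}{\sim}(s_i)$; the displayed equivalences make this both well-defined and injective, hence a bijection from $\set{\simmed{\pi}{\sim}(s_i) : i \in [k]}$ onto $\set{\simmed{\pi_0}{\sim}(s_i) : i \in [k]}$. Since $U$ is finite, $\simmed{U}{\sim}$ is finite and these two image sets have equal cardinality, so their complements in $\simmed{U}{\sim}$ also have equal cardinality; pick any bijection between the complements and let $\phi$ be the union of the two bijections. Then $\phi$ is a permutation of $\simmed{U}{\sim}$ with $\phi \circ \simmed{\pi}{\sim} = \simmed{\pi_0}{\sim}$ by construction.

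The third step is to push this through each constraint. Fix $c = (T, \Theta) \in C_{\sim}$. Since $\pi$ is eligible, $\pi|_{T} \in \Theta$, hence $\simmed{(\pi|_{T})}{\sim} = \simmed{\pi}{\sim}|_{T} \in \simmed{\Theta}{\sim}$. Because $c$ is CI for $\sim$, closure of $\simmed{\Theta}{\sim}$ under the permutation $\phi$ gives $\phi \circ \big(\simmed{\pi}{\sim}|_{T}\big) \in \simmed{\Theta}{\sim}$; but $\phi \circ \simmed{\pi}{\sim}|_{T} = \simmed{\pi_0}{\sim}|_{T} = \simmed{(\pi_0|_{T})}{\sim}$, so $\simmed{(\pi_0|_{T})}{\sim} \in \simmed{\Theta}{\sim}$, and then the first clause of the definition of CI yields $\pi_0|_{T} \in \Theta$. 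As $c$ was arbitrary, $\pi_0$ is eligible for $C_{\sim}$, completing the argument.

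I expect the only delicate point to be the second step: checking that the partial bijection between the classes hit by $\pi$ and those hit by $\pi_0$ extends to a genuine permutation of all of $\simmed{U}{\sim}$. This is exactly where finiteness of $U$ enters, and it is the analogue of the (easy) construction of $\phi$ in the proof of Proposition~\ref{prop:ui}. Everything else is a routine unwinding of the definitions of CI constraint, CI-pattern, and $\simmed{\pi}{\sim}$, together with the observation that $\simmed{(\pi|_{T})}{\sim} = \simmed{\pi}{\sim}|_{T}$.
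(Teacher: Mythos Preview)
Your proof is correct and follows essentially the same route as the paper: set up the equivalence $\simmed{\pi}{\sim}(s_i)=\simmed{\pi}{\sim}(s_j)\Leftrightarrow y_i=y_j\Leftrightarrow \simmed{\pi_0}{\sim}(s_i)=\simmed{\pi_0}{\sim}(s_j)$, build a permutation $\phi$ of $\simmed{U}{\sim}$ with $\simmed{\pi_0}{\sim}=\phi\circ\simmed{\pi}{\sim}$, and push through each $c=(T,\Theta)\in C_\sim$ using both clauses of the CI definition. Your construction of $\phi$ is in fact slightly more careful than the paper's: the paper simply sets $\phi(\simmed{u}{\sim})=\simmed{u}{\sim}$ on classes outside the image of $\simmed{\pi}{\sim}$, which need not yield a bijection when the images of $\simmed{\pi}{\sim}$ and $\simmed{\pi_0}{\sim}$ differ, whereas your argument via equal-cardinality complements is airtight.
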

\begin{proof}
 Suppose that $\pi$ is eligible for $C_{\sim}$. We show that $\pi_0$ is eligible for $C_{\sim}$, for any plan $\pi_0$ that has $p_{\sim}$ as its CI-pattern, and so $p_{\sim}$ is eligible for $C_{\sim}$.
 
 Let $p_{\sim} = (y_1, \dots, y_k)$.
 Observe that for any $s_i,s_j$, we have $\simmed{\pi}{\sim}(s_i) =  \simmed{\pi}{\sim}(s_j) \Leftrightarrow y_i = y_j \Leftrightarrow \simmed{\pi_0}{\sim}(s_i) = \simmed{\pi_0}{\sim}(s_j)$.
  Then define a permutation $\phi: \simmed{U}{\sim} \rightarrow \simmed{U}{\sim}$ as follows: $\phi(\simmed{u}{\sim}) = \simmed{\pi_0}{\sim}(s_i)$ if there exists $s_i \in S$ such that $\simmed{\pi}{\sim}(s_i) = \simmed{u}{\sim}$, and $\phi(\simmed{u}{\sim}) = \simmed{u}{\sim}$ otherwise. 
 As $\simmed{\pi_0}{\sim}(s_i) = \simmed{\pi}{\sim}(s_j)$ for any $s_i,s_j$ such that $\simmed{\pi}{\sim}(s_i)=\simmed{\pi}{\sim}(s_j)=\simmed{u}{\sim}$, $\phi$ is well-defined.
 Furthermore $\simmed{\pi_0}{\sim} = \phi \circ \simmed{\pi}{\sim}$.

 Then it follows from the definition of a class-independent constraint that for any $c = (T,\Theta) \in C_{\sim}$, 
 $\pi|_{T} \in \Theta 
 \Leftrightarrow {\simmed{\pi}{\sim}}|_{T} \in \simmed{\Theta}{\sim}
 \Leftrightarrow \phi \circ ({\simmed{\pi}{\sim}}|_{T}) \in \simmed{\Theta}{\sim}
 \Leftrightarrow {\simmed{\pi_0}{\sim}}|_{T} \in \simmed{\Theta}{\sim} 
 \Leftrightarrow  {\pi_0}|_{T} \in \Theta$.
 It follows that as $\pi$ satisfies every constraint in $C_{\sim}$, $\pi_0$ satisfies every constraint in $C_{\sim}$ and so $\pi_0$ is eligible for $C_{\sim}$, as required.
 
 For the converse, it follows by definition that if $p_{\sim}$ is eligible for $C_{\sim}$ then $\pi$ is eligible for $C_{\sim}$.
  \end{proof}

Now let $p = (p_{=}, p_{\sim})$ be a pair containing a UI-pattern and an CI-pattern. Then we call $p$ a \emph{(UI, CI)-pattern}. We say that $p$ is a (UI, CI)-pattern for $\pi$ if $p_{=}$ is a UI-pattern for $\pi$ and $p_{\sim}$ is a CI-pattern for $\pi$. 
We say that $p$ is \emph{eligible for $C=C_{=} \cup  C_{\sim}$} if $p_{=}$ is eligible for $C_{=}$ and $p_{\sim}$ is eligible for $C_{\sim}$.
The following two results follow immediately from Propositions~\ref{prop:ui} and~\ref{prop:ci} and definitions of UI- and CI-patterns.

\begin{lemma}\label{lem:elig}
 Let $p = (p_{=}, p_{\sim})$ be a (UI, CI)-pattern for a plan $\pi$. 
 Then $p$ is eligible for $C = C_{=} \cup C_{\sim}$ if and only if $\pi$ is eligible for $C$.
\end{lemma}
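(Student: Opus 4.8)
The plan is to derive Lemma~\ref{lem:elig} directly from Propositions~\ref{prop:ui} and~\ref{prop:ci} together with the definitions of eligibility for $(UI,CI)$-patterns. Since $p=(p_{=},p_{\sim})$ is a $(UI,CI)$-pattern for $\pi$, by definition $p_{=}$ is a UI-pattern for $\pi$ and $p_{\sim}$ is a CI-pattern for $\pi$, so Propositions~\ref{prop:ui} and~\ref{prop:ci} apply to these two components separately.

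First I would unwind the definition of $p$ being eligible for $C$: this means $p_{=}$ is eligible for $C_{=}$ \emph{and} $p_{\sim}$ is eligible for $C_{\sim}$. By Proposition~\ref{prop:ui} (applicable since $p_{=}$ is a UI-pattern for $\pi$), $p_{=}$ is eligible for $C_{=}$ if and only if $\pi$ is eligible for $C_{=}$. Similarly, by Proposition~\ref{prop:ci} (applicable since $p_{\sim}$ is a CI-pattern for $\pi$), $p_{\sim}$ is eligible for $C_{\sim}$ if and only if $\pi$ is eligible for $C_{\sim}$. Combining these two equivalences, $p$ is eligible for $C$ if and only if $\pi$ is eligible for both $C_{=}$ and $C_{\sim}$, which is exactly the statement that $\pi$ is eligible for $C=C_{=}\cup C_{\sim}$.

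The only point requiring the slightest care is observing that eligibility for a union of constraint sets decomposes as the conjunction of eligibility for each part: a plan $\pi$ satisfies every constraint in $C_{=}\cup C_{\sim}$ precisely when it satisfies every constraint in $C_{=}$ and every constraint in $C_{\sim}$. This is immediate from the definition of eligibility (a plan is eligible for a set of constraints if it satisfies each one). There is no real obstacle here; the lemma is a straightforward conjunction of the two preceding propositions, which is why the excerpt already remarks that it "follows immediately." I would present the argument in two or three sentences without introducing any new construction.
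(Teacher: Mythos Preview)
Your proposal is correct and follows exactly the approach indicated by the paper, which simply states that the lemma ``follows immediately from Propositions~\ref{prop:ui} and~\ref{prop:ci} and definitions of UI- and CI-patterns'' without spelling out a formal proof. Your unpacking of the definitions and the observation that eligibility for $C_{=}\cup C_{\sim}$ is the conjunction of eligibility for each part is precisely what the paper leaves implicit.
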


\begin{proposition}\label{prop:4}
{\em There is a (UI, CI)-pattern $p$ for every plan $\pi$.}
\end{proposition}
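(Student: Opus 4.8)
The plan is to build the pattern $p = (p_{=}, p_{\sim})$ directly from $\pi$ and to check that each of its two components satisfies the relevant definition. The only point requiring a moment's care is that the labels we assign must lie in $[k]$; since $|S| = k$, each of the two partitions of $S$ that we use has at most $k$ blocks, so a ``first-occurrence'' labelling does the job.

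First I would define $p_{=} = (x_1, \dots, x_k)$ by $x_i = \min\set{j \in [k] : \pi(s_j) = \pi(s_i)}$, and analogously $p_{\sim} = (y_1, \dots, y_k)$ by $y_i = \min\set{j \in [k] : \simmed{\pi}{\sim}(s_j) = \simmed{\pi}{\sim}(s_i)}$. Each minimum is taken over a nonempty subset of $[k]$ (the set contains $i$), so $x_i, y_i \in [k]$, and hence $p_{=}$ and $p_{\sim}$ are tuples of the required form.

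Next I would verify the defining biconditionals. For $p_{=}$: if $\pi(s_i) = \pi(s_j)$ then $\set{l : \pi(s_l) = \pi(s_i)} = \set{l : \pi(s_l) = \pi(s_j)}$, so their minima agree and $x_i = x_j$; conversely, if $x_i = x_j = m$ then $\pi(s_i) = \pi(s_m) = \pi(s_j)$. Thus $x_i = x_j \Leftrightarrow \pi(s_i) = \pi(s_j)$, which is precisely what it means for $p_{=}$ to be a UI-pattern for $\pi$. The argument for $p_{\sim}$ is the same with $\pi$ replaced by $\simmed{\pi}{\sim}$, yielding $y_i = y_j \Leftrightarrow \simmed{\pi}{\sim}(s_i) = \simmed{\pi}{\sim}(s_j)$, so $p_{\sim}$ is a CI-pattern for $\pi$. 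By definition, $p = (p_{=}, p_{\sim})$ is then a (UI, CI)-pattern for $\pi$, which proves the claim.

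There is no genuine obstacle here; the statement is essentially bookkeeping, with the mild subtlety being the justification that the constructed labels stay within $[k]$. (If desired, one could additionally note, using Proposition~\ref{pro:plan-coherence} and the fact that $=$ refines $\sim$, that $x_i = x_j$ implies $y_i = y_j$; this consistency between the two components is not needed for the proposition but will matter for later results.)
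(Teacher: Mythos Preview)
Your proof is correct. The paper itself does not give a proof of this proposition, stating only that it ``follow[s] immediately from \ldots\ definitions of UI- and CI-patterns''; your explicit first-occurrence labelling is exactly the kind of construction that justifies this remark, and your verification of the biconditionals and of the range condition $x_i,y_i\in[k]$ is clean and complete.
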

 
We say a (UI, CI)-pattern $p$ is \emph{realizable} if there exists a plan $\pi$ such that $\pi$ is authorized and $p$ is a (UI, CI)-pattern for $\pi$.
Given the above results, in order to solve a WSP instance with user- and class-independent constraints, it is enough to decide whether there exists a (UI, CI)-pattern $p$ such that $\mbox{}$
\begin{inparaenum}[(i)]
 \item $p$ is realizable, and 
 \item $p$ is eligible (and hence $\pi$ is eligible) for $C=C_{=}\cup C_{\sim}$.
\end{inparaenum}

We will enumerate all possible (UI, CI)-patterns, and for each one check whether the two conditions hold. 
We defer the explanation of how to determine whether $p$ is realizable until Sec.~\ref{sec:real}.
We now show it is possible to check whether a (UI, CI)-pattern $p = (p_=,p_\sim)$ is eligible in time polynomial in the input size $N$. 
Indeed, in polynomial time, we can construct plans $\pi_=$ and $\pi_\sim$ with patterns $p_=$ and $p_\sim$, respectively, where $\pi_{=}(s_i) = \pi_{=}(s_j)$ if and only if $x_i = x_j$ and
$\pi_\sim(s_i) \sim \pi_\sim(s_j)$ if and only if $y_i = y_j$. (In particular, we can select a representative user from each equivalence class in $\simmed{U}{\sim}$.)
By Lemma \ref{lem:elig} and Propositions \ref{prop:ui} and \ref{prop:ci}, $p$ is eligible if and only if both $\pi_{=}$ and $\pi_{\sim}$ are eligible. 
By our assumption before Example \ref{ex1}, eligibility of both $\pi_{=}$ and $\pi_{\sim}$ can be checked in polynomial time.\footnote{Clearly, it is not hard to check eligibility of $p$ without explicitly constructing $\pi_{=}$ and $\pi_{\sim}$, as is done in our algorithm implementation, described in Section~\ref{sec:experiments}.\footnote{ORLY?}} 
Note, however, that $\pi_{=}$ and $\pi_{\sim}$ may be different plans, so this simple check for eligibility does not give us a check for realizability of $p$.

\section{Checking Realizability}\label{sec:real}



A \emph{partial plan} $\pi$ is a function from a subset $T$ of $S$ to $U$. In particular, a plan is a partial plan.  To avoid confusion with partial
plans, sometimes we will call plans {\em complete plans}.
 We can easily extend the definitions of \emph{eligible, authorized} and \emph{valid} plans to partial plans: the only difference is that we only consider authorizations for steps in $T$ and constraints with scope being a subset of $T$. 

We also define \emph{partial patterns}.
For a UI or CI-pattern $q = (x_1, \dots, x_k)$ and a subset $T \subseteq S$, let the pattern $q|_{T} = (z_1, \dots, z_k)$,
where $z_i = x_i$ if $s_i \in T$, and $z_i = 0$ otherwise.
We say that $p|_{T}$ is a (UI, CI)-pattern for a partial plan $\pi: T \rightarrow U$ if $p|_{T}$ with all coordinates with $0$ values removed is a (UI, CI)-pattern for $\pi$. 
We therefore have that if $p$ is a (UI, CI)-pattern for a plan $\pi$,
then $p|_{T}$ is a (UI, CI)-pattern for $\pi$ restricted to $T$.

Let $p = (p_{=} = (x_1, \dots, x_k), p_{\sim} = (y_1, \dots, y_k))$ be a (UI, CI)-pattern. We say that $p$ is \emph{consistent} if $x_i=x_j \Rightarrow y_i=y_j$ for all $i,j \in [k]$. 
Recall that if $p$ is the (UI, CI)-pattern for $\pi$, then $x_i=x_j \Leftrightarrow \pi(s_i) = \pi(s_j)$, and $y_i=y_j \Leftrightarrow \simmed{\pi}{\sim}(s_i) = \simmed{\pi}{\sim}(s_j)$.
Thus Proposition \ref{pro:plan-coherence} implies that if $p$ is the (UI, CI)-pattern for any plan then $p$ is consistent.
Henceforth, we will only consider (UI, CI)-patterns that are consistent.

Given a (UI, CI)-pattern $(p_=,p_{\sim})$, we must determine whether this (UI, CI)-pattern can be realized, given the authorization lists defined on users.
The patterns $p_{=}$ and $p_{\sim}$ define two sets of equivalence classes on $S$:
$s_i$ and $s_j$ are in the same equivalence class of $S$ defined by $p_{=}$ ($p_{\sim}$, respectively) if and only if $x_i=x_j$ ($y_i=y_j$, respectively).

Moreover each equivalence class induced by $p_{\sim}$ is partitioned by equivalence classes induced by $p_=$.
We must determine whether there exists a plan $\pi : S \rightarrow U$ that simultaneously $\mbox{}$%
\begin{inparaenum}[(i)]
 \item has UI-pattern $p_{=}$;
 \item has CI-pattern $p_{\sim}$; and
 \item assigns an authorized user to each step.
\end{inparaenum}
Informally, our algorithm for checking realizability computes two things.
\begin{itemize}
 \item For each pair $(T,V)$, where $T \subseteq S$ is an equivalence class induced by $p_\sim$ and $V \subseteq U$ is an equivalence class induced by $\sim$, whether there exists an injective mapping from the equivalence classes in $T$ induced by $p_=$ to authorized users in $V$.
 We call such a mapping a \emph{second-level} mapping.
 \item Whether there exists an injective mapping $f$ from the set of equivalence classes induced by $p_{\sim}$ to the set of equivalence classes induced by $\sim$ such that $f(T) = V$ only if there exists a second-level mapping from $T$ to $V$.  We call $f$ a \emph{top-level} mapping.
\end{itemize}
If a top-level mapping exists, then, by construction, it can be ``deconstructed'' into authorized partial plans defined by second-level mappings.
We compute top- and second-level mappings using matchings in bipartite graphs, as described below.\\

\noindent{\bf The Top-level Bipartite Graph.}
The UI-pattern $p_{=} = (x_1, \dots, x_k)$ induces an equivalence relation on $S = \{s_1, \dots, s_k\}$,
where $s_i$ and $s_j$ are equivalent if and only if $x_i = x_k$. 
Let ${\cal S} = \{S_1, \dots, S_l\}$ be the set of equivalence classes of $S$ under this relation.
Similarly, the CI-pattern $p_{\sim} = (y_1, \dots, y_k)$ induces an equivalence relation on $S$, where $s_i,s_j$ are equivalent if and only if $y_i = y_j$. Let ${\cal T} = \{T_1, \dots, T_m\}$ be the equivalence classes under this relation. 
Observe that since $p$ is consistent, we have $k \ge l \ge m$ and for any $S_i,T_j$, either $S_i \subseteq T_j$ or $S_i \cap T_j = \emptyset$.

\begin{definition}\label{def:top-level-graph}
 Given a (UI, CI)-pattern $p = (p_{=}, p_{\sim})$,
 the \emph{top-level bipartite graph} $G_{p}$ is defined as follows.
 Let the partite sets of $G_p$ be  ${\cal T}$ and $\simmed{U}{\sim}.$
 For each $T_r \in {\cal T}$ and class $\simmed{u}{\sim}$, we have an edge between $T_r$ and $\simmed{u}{\sim}$ if and only if there exists an authorized partial plan $\pi_r: T_r \rightarrow \simmed{u}{\sim}$ such that ${p_{=}}|_{T_r}$ is a UI-pattern for $\pi_r$.
\end{definition}

\begin{lemma}
  If a (UI, CI)-pattern $p = (p_{=}, p_{\sim})$ is realizable, then $G_p$ has a matching covering ${\cal T}$.   
\end{lemma}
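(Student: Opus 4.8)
The plan is to read the required matching directly off a realizing plan. Suppose $p=(p_{=},p_{\sim})$ is realizable, witnessed by an authorized plan $\pi:S\to U$ for which $p$ is a (UI, CI)-pattern. Writing $p_{\sim}=(y_1,\dots,y_k)$, the defining property of a CI-pattern gives $y_i=y_j\Leftrightarrow \simmed{\pi}{\sim}(s_i)=\simmed{\pi}{\sim}(s_j)$ for all $i,j$. Hence all steps lying in a common class $T_r\in{\cal T}$ are mapped by $\simmed{\pi}{\sim}$ to one and the same class of $\simmed{U}{\sim}$, so one can well-define $f:{\cal T}\to\simmed{U}{\sim}$ by $f(T_r)=\simmed{\pi}{\sim}(s)$ for any (equivalently, every) $s\in T_r$. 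The strategy is to show that the edge set $\{\{T_r,f(T_r)\}:r\in[m]\}$ is a matching of $G_p$ covering ${\cal T}$.

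First I would check that $f$ is injective: if $f(T_r)=f(T_{r'})$, then choosing $s_i\in T_r$ and $s_j\in T_{r'}$ yields $\simmed{\pi}{\sim}(s_i)=\simmed{\pi}{\sim}(s_j)$, whence $y_i=y_j$ by the CI-pattern equivalence above, so $s_i$ and $s_j$ lie in the same class under the relation induced by $p_{\sim}$, i.e.\ $T_r=T_{r'}$. Next I would verify that each pair $\{T_r,f(T_r)\}$ is actually an edge of $G_p$ in the sense of Definition~\ref{def:top-level-graph}. Take $\pi_r:=\pi|_{T_r}$. It is authorized because $\pi$ is; its image lies in $f(T_r)$ by the construction of $f$; and since $p_{=}$ is a UI-pattern for $\pi$, its restriction $p_{=}|_{T_r}$ is a UI-pattern for $\pi|_{T_r}=\pi_r$, by the remark following the definition of partial patterns. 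Thus $\pi_r$ witnesses the edge $\{T_r,f(T_r)\}$.

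Putting these two observations together: the edges $\{T_r,f(T_r)\}$ for $r\in[m]$ are genuine edges of $G_p$, and they are pairwise vertex-disjoint because the classes $T_1,\dots,T_m$ are distinct and $f$ is injective. Hence they form a matching of $G_p$ that covers every vertex of ${\cal T}$, which is exactly the claim. I do not anticipate a serious obstacle here; the only subtle point worth stating explicitly is that it is the biconditional in the definition of a CI-pattern (not merely the ``$y_i=y_j\Rightarrow\simmed{\pi}{\sim}(s_i)=\simmed{\pi}{\sim}(s_j)$'' direction) that forces $f$ to be injective, and injectivity of $f$ is precisely what turns the family of edges into a matching rather than just a set of edges.
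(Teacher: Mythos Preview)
Your proposal is correct and follows essentially the same approach as the paper: both define, for each $T_r\in{\cal T}$, the class $\simmed{u_r}{\sim}=\simmed{\pi}{\sim}(s)$ for $s\in T_r$ (your $f(T_r)$), use the biconditional in the CI-pattern definition to show these classes are pairwise distinct, and verify the edge condition via the restriction $\pi_r=\pi|_{T_r}$. Your explicit remark that injectivity requires the full biconditional (not just one implication) is a nice clarification that the paper leaves implicit.
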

\begin{proof}
 Let $\pi$ be an authorized plan such that $p$ is a (UI, CI)-pattern for $\pi$.
 As $p_{\sim}$ is a CI-pattern for $\pi$, we have that for each $T_r \in {\cal T}$ and all $s_i,s_j \in T_r$, $\simmed{\pi}{\sim}(s_i) = \simmed{\pi}{\sim}(s_j)$. 
 Therefore $\pi(T_r) \subseteq \simmed{u}{\sim}$ for some $u \in U$.
 Let $\simmed{u_r}{\sim}$ be this equivalence class for each $T_r$.
 As $p_{\sim}$ is a CI-pattern for $\pi$, we have that for all $r \neq r'$ and any $s_i \in T_r, s_j \in T_{r'}$, $\simmed{\pi}{\sim}(s_i) \neq \simmed{\pi}{\sim}(s_j)$. It follows that $\simmed{u_r}{\sim} \neq \simmed{u_{r'}}{\sim}$ for any $r \neq r'$.
 
Let $M = \{T_r\simmed{u_r}{\sim}\in E(G_p):\ T_r \in {\cal T}\}$. As $\simmed{u_r}{\sim} \neq \simmed{u_{r'}}{\sim}$ for any $r \neq r'$ we have that $M$ is a matching that covers  ${\cal T}$.
It remains to show that $M$ is a matching of $G_p$ covering ${\cal T}$, i.e. that $T_r\simmed{u_r}{\sim}$ is an edge in $G_p$ for each $T_r$.
For each $T_r \in {\cal T}$, let $\pi_r$ be $\pi$ restricted to $T_r$.
Then $\pi_r$ is a function from $T_r$ to $\simmed{u_r}{\sim}$.
As $\pi$ is authorized, $\pi_r$ is also authorized.
As $p_{=}$ is a UI-pattern for $\pi$,  we have that ${p_{=}}|_{T_r}$ is a UI-pattern for $\pi_r$.
Therefore $\pi_r$ satisfies all the conditions for there to be an edge $T_r\simmed{u_r}{\sim}$ in $G_p$.
\end{proof}

We have shown that for any (UI, CI)-pattern to be realizable, it must be consistent and its top-level bipartite graph must have a matching covering ${\cal T}$.
We will now show that these necessary conditions are also sufficient.

\begin{lemma}\label{lem6}
 Let  $p = (p_{=} = (x_1, \dots, x_k), p_{\sim} = (y_1, \dots, y_k))$ 
 be a (UI, CI)-pattern which is consistent, and such that $G_p$ has a matching covering ${\cal T}$.
 Then $p$ is realizable.
\end{lemma}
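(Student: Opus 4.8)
The plan is to build an explicit authorized plan $\pi$ witnessing realizability, starting from the matching guaranteed in the top-level graph and then extending it using matchings in appropriately defined second-level bipartite graphs. First I would fix a matching $M$ in $G_p$ covering $\mathcal{T}$; write $M = \{T_r \simmed{u_r}{\sim} : T_r \in \mathcal{T}\}$, so that the classes $\simmed{u_r}{\sim}$ are pairwise distinct (distinctness is forced by $M$ being a matching). By Definition~\ref{def:top-level-graph}, for each $r$ the edge $T_r \simmed{u_r}{\sim}$ certifies an authorized partial plan $\pi_r : T_r \to \simmed{u_r}{\sim}$ for which ${p_{=}}|_{T_r}$ is a UI-pattern. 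I would then define $\pi : S \to U$ by gluing: for $s_i \in T_r$, set $\pi(s_i) = \pi_r(s_i)$. This is well-defined since $\mathcal{T}$ partitions $S$, and $\pi$ is authorized since each $\pi_r$ is.

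Next I would verify the two pattern conditions. For the CI-pattern: since $\pi(T_r) \subseteq \simmed{u_r}{\sim}$ with the $\simmed{u_r}{\sim}$ pairwise distinct, we get $\simmed{\pi}{\sim}(s_i) = \simmed{\pi}{\sim}(s_j)$ exactly when $s_i, s_j$ lie in the same $T_r$, i.e.\ exactly when $y_i = y_j$; hence $p_{\sim}$ is a CI-pattern for $\pi$. For the UI-pattern: within each block $T_r$, the equality pattern of $\pi$ agrees with that of $\pi_r$, which realizes ${p_{=}}|_{T_r}$, so $x_i = x_j \Leftrightarrow \pi(s_i) = \pi(s_j)$ holds whenever $s_i, s_j$ are in the same $T_r$. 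Across distinct blocks $T_r \ne T_{r'}$ we have $\pi(s_i) \ne \pi(s_j)$ automatically, because their images lie in distinct $\sim$-classes; and consistency of $p$ guarantees that in this cross-block case $x_i \ne x_j$ as well (if $x_i = x_j$ then $y_i = y_j$, forcing $s_i, s_j$ into the same $T_r$). So the UI-pattern condition holds globally, $p_{=}$ is a UI-pattern for $\pi$, and therefore $p$ is a (UI, CI)-pattern for $\pi$, making $p$ realizable.

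The one point that needs genuine care — and is the main obstacle — is the cross-block analysis of the UI-pattern: we must be sure that two steps in different $T_r$'s never accidentally receive the same user and never carry the same $p_=$-label. Both directions are handled by the observations above (distinct $\sim$-classes kill the first; consistency kills the second), but it is worth spelling out that consistency is precisely what is used here, matching the remark just before the lemma that only consistent patterns are considered. Everything else is routine gluing. One could phrase the whole argument slightly more symmetrically by also invoking Propositions~\ref{prop:ui} and~\ref{prop:ci} at the end — but since we are constructing $\pi$ to have $p$ as its pattern directly, realizability follows immediately from the definition without needing those.
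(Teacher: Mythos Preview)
Your proof is correct and follows essentially the same approach as the paper's: fix the matching, extract the authorized partial plans $\pi_r$ guaranteed by the edge definitions in $G_p$, glue them into a global $\pi$, and verify the CI- and UI-pattern conditions via the same case analysis (with consistency used precisely in the cross-block case). The mention of second-level bipartite graphs in your opening sentence is superfluous---you never actually invoke them, nor does the paper for this direction, since the existence of each $\pi_r$ is already baked into the definition of an edge of $G_p$.
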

\begin{proof}
 Fix a matching $M$ in $G_p$ covering ${\cal T}$. 
 For each $T_r \in {\cal T}$, let $\simmed{u_r}{\sim} \in \simmed{U}{\sim}$ be the equivalence class of $U$ for which $T_r\simmed{u_r}{\sim}$ is an edge in $M$.
 Let $\pi_r$ be the authorized partial plan $\pi_r: T_r \rightarrow \simmed{u_r}{\sim}$ such that 
 ${p_{=}}|_{T_r}$ is a UI-pattern for $\pi_r$ (which must exist as  $T_r\simmed{u_r}{\sim}$ is an edge in $G_p$).
 Let $\pi = \bigcup_{T_r \in {\cal T}}\pi_r$.
 As each $\pi_r$ is authorized, $\pi$ is also authorized.
 It remains to show that $p$ is a (UI, CI)-pattern for $\pi$.
 
 We first show that $p_{\sim}$ is a CI-pattern for $\pi$. 
 Consider $y_i,y_j$ for any $i,j \in [k]$.
 If $y_i=y_j$, then $s_i,s_j \in T_r$ for some $r$, so by construction $\pi(s_i), \pi(s_j) \in \simmed{u_r}{\sim}$, and hence $\simmed{\pi}{\sim}(s_i) = \simmed{\pi}{\sim}(s_j)$.
 If $y_i\neq y_j$ then $\pi(s_i) \in \simmed{u_r}{\sim}$ and $\pi(s_j) \in \simmed{u_{r'}}{\sim}$, 
 and as  $M$ is a matching, $\simmed{u_r}{\sim} \neq \simmed{u_{r'}}{\sim}$.
 Therefore $\simmed{\pi}{\sim}(s_i) \neq \simmed{\pi}{\sim}(s_j)$.
 We therefore have that $p_{\sim}$ is a CI-pattern for $\pi$.
 
 We now show that $p_{=}$ is a UI-pattern for $\pi$. 
 Consider $x_i,x_j$ for any $i,j \in [k]$.
 If $x_i=x_j$, then as $p$ is consistent we also have $y_i=y_j$. Therefore $s_i,s_j \in T_r$ for some $r$. 
 As $\pi_r$ satisfies the conditions of the edge $T_r\simmed{u_r}{\sim}$, we have that $\pi_r(s_i)=\pi_r(s_j)$ and so $\pi(s_i)=\pi(s_j)$.
 If $x_i \neq x_j$, there are two cases to consider. 
If $y_i = y_j$, then again $s_i,s_j \in T_r$, and as $\pi_r$ satisfies the conditions of the edge $T_r\simmed{u_r}{\sim}$, $\pi_y(s_i) \neq \pi_y(s_j)$ and so $\pi(s_i) \neq \pi(s_j)$.
If on the other hand $y_i \neq y_j$, then by construction $\pi(s_i) \in \simmed{u_r}{\sim}$ and $\pi(s_j) \in \simmed{u_{r'}}{\sim}$ for some $r \neq r'$, and so $\pi(s_i) \neq \pi(s_j)$.
Thus $\pi_{=}$ is a UI-pattern for $\pi$, as required.
\end{proof}

\noindent{\bf The Second-level Bipartite Graph.}
For each (UI, CI)-pattern $p=(p_{=}, p_{\sim})$, we need to construct the graph $G_p$ and decide whether it has a matching covering ${\cal T}$, in order to decide whether $p$ is realizable. Given $G_p$, a maximum matching can be found in polynomial time using standard techniques, but constructing $G_p$ itself is non-trivial.
For each potential edge $T_r\simmed{u}{\sim}$ in $G_p$, we need to decide whether there exists an authorized partial plan $\pi_r: T_r \rightarrow \simmed{u}{\sim}$ such that ${p_{=}}|_{T_r}$ is a UI-pattern for $\pi_r$.
We can decide this 
by constructing another bipartite graph, $G_{T_r\simmed{u}{\sim}}$. Recall that ${\cal S} = \{S_1, \dots, S_l\}$ is a partition of $S$ into equivalence classes, where $s_i,s_j$ are equivalent if $x_i=x_j$, and for each $S_h \in {\cal S}$, either $S_h \subseteq T_r$ or $S_h \cap T_r = \emptyset$.
 Define ${\cal S}_r = \{S_h: S_h \subseteq T_r\}$.

 \begin{definition}\label{def:second-level-graph}
 Given a (UI, CI)-pattern $p = (p_{=} = (x_1, \dots, x_k), p_{\sim} = (y_1, \dots, y_k))$, a set $T_r \in {\cal T}$ and equivalence class $\simmed{u}{\sim} \in \simmed{U}{\sim}$, the \emph{second-level bipartite graph} $G_{T_r\simmed{u}{\sim}}$ is defined as follows:
Let the partite sets of $G$ be ${\cal S}_r$ and $\simmed{u}{\sim}$ and
 for each $S_h \in {\cal S}_r$ and $v \in \simmed{u}{\sim}$, we have an edge between $S_h$ and $v$ if and only if $v$ is authorized for all steps in $S_h$.
 \end{definition}

 \begin{lemma}\label{lem:2level}
  Given $T_r \in {\cal T}$, $\simmed{u}{\sim} \in \simmed{U}{\sim}$, the following conditions are equivalent.
  \begin{compactitem}
  \item There exists an authorized partial plan $\pi: T_r \rightarrow \simmed{u}{\sim}$ such that
  ${p_{=}}|_{T_r}$ is a UI-pattern for $\pi$.
  \item $G_{T_r\simmed{u}{\sim}}$ has a matching that covers ${\cal S}_r$.
  \end{compactitem}
 \end{lemma}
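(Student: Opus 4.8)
The plan is to prove the equivalence by the standard correspondence between partial plans and matchings in a bipartite graph, exactly mirroring the argument used for the top-level graph in the proof of Lemma~\ref{lem6}.

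First I would record one structural fact that makes the two sides line up: $T_r = \bigcup_{S_h \in {\cal S}_r} S_h$, i.e. the $p_=$-classes meeting $T_r$ actually partition $T_r$. This uses consistency of $p$ (a standing assumption on all (UI, CI)-patterns we consider): if $s_i \in T_r$ and $S_h$ is the $p_=$-class of $s_i$, then every $s_j \in S_h$ has $x_j = x_i$, hence $y_j = y_i$ by consistency, so $s_j \in T_r$; thus $S_h \subseteq T_r$ and $S_h \in {\cal S}_r$. I would also recall that, by the definition of a partial pattern, ``${p_=}|_{T_r}$ is a UI-pattern for $\pi : T_r \rightarrow U$'' means precisely that for all $s_i, s_j \in T_r$ we have $\pi(s_i) = \pi(s_j) \Leftrightarrow x_i = x_j$; equivalently, $\pi$ is constant on each $S_h \in {\cal S}_r$ and takes distinct values on distinct such classes.

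For the forward direction, suppose $\pi : T_r \rightarrow \simmed{u}{\sim}$ is authorized and ${p_=}|_{T_r}$ is a UI-pattern for $\pi$. Then $\pi$ is constant on each $S_h \in {\cal S}_r$; write $g(S_h)$ for this common value. Since $\pi$ is authorized, $g(S_h) \in \simmed{u}{\sim}$ is authorized for every step of $S_h$, so $S_h\,g(S_h)$ is an edge of $G_{T_r\simmed{u}{\sim}}$; and since ${p_=}|_{T_r}$ is a UI-pattern, $g$ is injective. Hence $\{S_h\,g(S_h) : S_h \in {\cal S}_r\}$ is a matching of $G_{T_r\simmed{u}{\sim}}$ covering ${\cal S}_r$. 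For the converse, fix a matching $M$ of $G_{T_r\simmed{u}{\sim}}$ covering ${\cal S}_r$, and let $v_h \in \simmed{u}{\sim}$ be the user matched to $S_h$. Define $\pi : T_r \rightarrow \simmed{u}{\sim}$ by $\pi(s) = v_h$ whenever $s \in S_h$; by the structural fact this is well-defined and total on $T_r$. Each edge $S_h v_h \in M$ certifies that $v_h$ is authorized for all steps of $S_h$, so $\pi$ is authorized; and since $M$ is a matching, $h \neq h'$ implies $v_h \neq v_{h'}$, so $\pi$ is constant on each $S_h$ and injective across classes, i.e. ${p_=}|_{T_r}$ is a UI-pattern for $\pi$. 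This gives both implications.

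I do not expect a genuine obstacle here: the argument is a routine Hall-type correspondence. The only point needing care is that the structural fact $T_r = \bigcup_{S_h\in{\cal S}_r} S_h$ — which is what lets us build $\pi$ on all of $T_r$ in the converse — relies on consistency of $p$, but consistency has already been adopted as a standing hypothesis.
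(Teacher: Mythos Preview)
Your proof is correct and follows essentially the same approach as the paper's: both directions proceed by the direct correspondence between the user $v_h$ assigned by $\pi$ to a class $S_h$ and the matching edge $S_h v_h$, with injectivity of the matching giving the UI-pattern condition and the edge definition giving authorization. The one extra thing you do is make explicit that $T_r = \bigcup_{S_h\in{\cal S}_r} S_h$ via consistency, which the paper had already observed in the text preceding Definition~\ref{def:top-level-graph} and silently relies on here.
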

 \begin{proof}
 Suppose first that there exists an authorized partial plan $\pi: T_r \rightarrow \simmed{u}{\sim}$ such that ${p_{=}}|_{T_r}$ is a UI-pattern for $\pi$.
 For each $S_h \in {\cal S}_r$ and any $s_i,s_j \in S_h$, we have that $x_i=x_j$ and so $\pi(s_i)=\pi(s_j)$. So let $v_h$ be the user in $\simmed{u}{\sim}$ such that $\pi(s) = v_h$ for all $s \in S_h$.
 As $\pi$ is authorized, clearly $v_h$ is authorized for all $s \in S_h$, and so $S_hv_h$ is an edge in $G_{T_r\simmed{u}{\sim}}$. Furthermore for any $s_i \in S_h, s_j \in S_{h'}, h \neq h'$, we have that $x_i \neq x_j$ and so $\pi(s_i) \neq \pi(s_j)$ (as ${p_{=}}|_{T_r}$ is a UI-pattern for $\pi$). Therefore $M = \{S_hv_h: S_h \in {\cal S}_r\}$ is a matching in $G_{T_r\simmed{u}{\sim}}$ that covers ${\cal S}_r$, as required.
 
 Conversely, suppose that $G_{T_r\simmed{u}{\sim}}$ has a matching $M$ that covers ${\cal S}_r$.
 For each $S_h \in {\cal S}_r$, let $v_h$ be the user matched to $S_h$ in $M$.
 Let $\pi: T_r \rightarrow \simmed{u}{\sim}$ be the partial plan such that $\pi(s) = v_h \Leftrightarrow s \in S_h$.
 As $v_h \neq v_{h'}$ for any $S_h \neq S_{h'}$, and $x_i = x_j$ if and only if $s_i,s_j$ are in the same $S_h$, we have that $\pi(s_i)=\pi(s_j)$ if and only if $x_i = x_j$, and so ${p_{=}}|_{T_r}$ is a UI-pattern for $\pi$.
 Furthermore, as $v_h$ is authorized for all $s \in S_h$, $\pi$ is authorized, as required.
 \end{proof}
 
 \vspace{2mm}
 
 \section{FPT Algorithm}\label{sec:main}

\SetKwInOut{Input}{input}
\SetKwInOut{Output}{output}

\begin{algorithm}[!tb]
\caption{Main}\label{alg:start}
\Input {WSP instance $W = (S, U, \mathcal{A}, C)$}
\Output {UNSAT or SAT}
$p _{=} = 0^k$\;
$p_{\sim} = 0^k$\;
\Return{ PatBackTrack$(W,p_=,p_\sim)$};
\end{algorithm}


\begin{algorithm}[!h]
\caption{PatBackTrack$(W,p_=,p_\sim)$}\label{alg:recursion}
\Input {WSP instance $W = (S, U, \mathcal{A}, C)$, partial patterns $p_= = (x_1,\dots,x_k)$ and $p_{\sim} = (y_1,\dots,y_k)$}
\Output {UNSAT or SAT}
 \eIf{$p_=$ is complete and $p_\sim$ is complete}
 {
  \Return{Realizable$(W,p)$}\;
 }
 {
  \eIf{$p_=$ is incomplete}
  {
   Choose $i$ such that $x_i=0$\;
  \For{each $a \in \{1,\ldots,\max\{x_j : 1\le j\le k\} +1\}$}
   {  $x_i = a$\;
    \If{$\exists u$ authorized for all $s_j$ such that $x_j = a$ and $p_{=}$ is eligible}
    {
     \If{PatBackTrack$(W,p_=,p_\sim)$ returns SAT}{
      Return SAT\;
     }
    }
   }
  }
  {
    Choose $i$ such that $y_i = 0$\;
    \For{each $a \in \{1,\dots,\max\{y_j : 1 \le j \le k\} + 1\}$}
    {
     \For{each $j$ such that $x_j = x_i$}
      {
       $y_j = a$\;
      }
      \If{$p_\sim$ is eligible}
      {
       \If{PatBackTrack$(W,p_=,p_\sim)$ returns SAT}{
        Return SAT\;
       }
      }
    }
  }
 }
  Return UNSAT\;
\end{algorithm}

Algorithms \ref{alg:start} and \ref{alg:recursion} provide a partial pseudo-code of our FPT algorithm (still for the case of a single level of CI-constraints). To save space, we do not describe procedure {\em Realizable}($W, p$), which is a construction of bipartite graphs and search for matchings in those graphs as described in Section \ref{sec:real}.  We also omit a weight heuristic, which is described in Section~\ref{sec:experiments}.

 Our algorithm generates (UI, CI)-patterns $p$ in a backtracking manner as follows. 
It first generates partial patterns $p_{=} = (x_1, \dots, x_k)$, where
the coordinates $x_i=0$ are assigned one by one to integers in $[k']$,
where $k'=\max_{1\le j\le k} \{x_j\} +1$ (see Algorithm \ref{alg:recursion}). 
The algorithm keeps $x_i$ set to $a\in [k']$ only if there is a user authorized to perform all steps $s_j$ for which $x_j=a$ in $p_{=}$. The algorithm also checks that the pattern $p_{=}$ does not violate any constraints whose scope contain the corresponding step $s_i$. 

If an eligible pattern  $p_{=} = (x_1, \dots, x_k)$ has been completed (i.e., $x_j\neq 0$ for each $j\in [k]$), the partial patterns $p_{\sim} = (y_1, \dots, y_k)$ are generated as above but with two differences:
the algorithm ensures the consistency condition and no preliminary
authorizations checks are performed (see Algorithm \ref{alg:recursion}).

If an eligible (UI, CI)-pattern $p$ has been constructed, a procedure constructing bipartite graphs and searching for matchings in them as described in Section \ref{sec:real} decides whether $p$ is realizable. 
The algorithm stops when either a realizable and eligible pattern is found, or all eligible patterns have been considered and the WSP instance is declared unsatisfiable. 

The following theorem follows from the more general Theorem~\ref{thm:nested-algorithm}, given in Section~\ref{sec:nested}.
Note that the algorithm we describe above is equivalent to the special case of $r=2$ of the general algorithm.

\begin{theorem}\label{prop:complexity-realizability}
 We can solve WSP with UI and CI constraints in $O^*(2^{k\log (3k)})$ time.\footnote{In this paper, all logarithms are of base 2.}
\end{theorem}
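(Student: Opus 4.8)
The plan is to analyze Algorithm~\ref{alg:recursion} directly, proving (a) that it returns SAT iff $W$ is satisfiable and (b) that it runs in $O^*(2^{k\log(3k)})$ time; everything rests on the pattern machinery of Sections~\ref{sec:pp}--\ref{sec:real}. Throughout I would assume, harmlessly for the analysis, that the lines ``Choose $i$ such that $x_i=0$'' and ``Choose $i$ such that $y_i=0$'' always pick the least such index, so that every partial $p_=$ the algorithm holds is a \emph{canonical} string $(x_1,\dots,x_j,0,\dots,0)$ with $x_1=1$ and $x_{i+1}\le 1+\max_{h\le i}x_h$, and likewise $p_\sim$ is canonical on the $=$-blocks taken in order of their least step. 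This restriction is legitimate because, by the definitions of Section~\ref{sec:pp} and Propositions~\ref{prop:ui} and~\ref{prop:ci}, whether a (UI, CI)-pattern is eligible or realizable depends only on the pair of partitions of $S$ it induces, not on the integer labels.

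For correctness I would chain the results already in the paper: by Proposition~\ref{prop:4} every plan has a (UI, CI)-pattern, by Lemma~\ref{lem:elig} a pattern is eligible iff any plan carrying it is eligible, and by the matching characterization of realizability (Lemma~\ref{lem6} and the lemma preceding it, with Lemma~\ref{lem:2level} supplying the edges of $G_p$ through the second-level graphs of Definitions~\ref{def:top-level-graph}--\ref{def:second-level-graph}) the routine \emph{Realizable} correctly decides realizability of a pattern. Hence $W$ is satisfiable iff some (UI, CI)-pattern is simultaneously eligible and realizable, and it remains to check that the backtracking reaches a canonical representative of every such pattern. Two small points suffice: (i) restricting the search to \emph{consistent} patterns loses nothing, since a realizable pattern is the pattern of some plan and so is consistent by Proposition~\ref{pro:plan-coherence}, and the $p_\sim$-phase preserves consistency by assigning $y$ to a whole $=$-block at once; and (ii) the preliminary authorization test in the $p_=$-phase (``some $u$ is authorized for all steps of the current block'') is only a necessary condition for realizability, so it discards no realizable pattern, while the eligibility tests on partial patterns are monotone along a branch. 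Thus no eligible-and-realizable pattern is skipped, and the algorithm outputs SAT precisely when it encounters one.

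For the running time, note first that the algorithm does $O(k)\cdot\mathrm{poly}(n)=\mathrm{poly}(n)$ work at each node of its recursion tree (a loop over $O(k)$ branches, each guarded by a polynomial authorization or eligibility check, cf.\ the discussion before Section~\ref{sec:real}), and at each leaf additionally runs \emph{Realizable}, which is polynomial as it amounts to $O(kn)$ maximum-matching computations in bipartite graphs on $O(n)$ vertices (Lemmas~\ref{lem6} and~\ref{lem:2level}). So it is enough to bound the number of nodes. Writing $S(k,l)$ for the number of partitions of $[k]$ into $l$ nonempty parts and $B_m$ for the $m$-th Bell number, a complete $p_=$ is one of the $S(k,l)$ partitions of $S$ into $l$ blocks, and for each such $p_=$ a complete consistent canonical $p_\sim$ is one of the $B_l$ partitions of those $l$ blocks; partial patterns cost only a polynomial factor more, since canonical $p_=$'s of all lengths $\le k$ number $\sum_{j=0}^{k}B_j=O^*(B_k)$ and the canonical partial $p_\sim$'s above a fixed $p_=$ with $l$ blocks number $\sum_{t=0}^{l}B_t=O^*(B_l)$. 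Hence the node count is $O^*\!\big(\sum_{l=1}^{k}S(k,l)\,B_l\big)$, and I would finish using the easy bound $B_l\le l!$ together with the observation that $S(k,l)\,l!$ is the number of surjections $[k]\to[l]$ and so at most $l^k$:
\[
\sum_{l=1}^{k}S(k,l)\,B_l\ \le\ \sum_{l=1}^{k}S(k,l)\,l!\ \le\ \sum_{l=1}^{k}l^{k}\ \le\ k\cdot k^{k}\ =\ k^{k+1}\ \le\ 3^{k}k^{k}\ =\ (3k)^{k}\ =\ 2^{k\log(3k)} .
\]
Multiplying by the per-node $\mathrm{poly}(n)$ leaves the bound unchanged, giving the claimed $O^*(2^{k\log(3k)})$.

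The main obstacle is exactly this counting step, and in particular avoiding the naive estimate ``(number of $p_=$'s)$\,\times\,$(number of $p_\sim$'s)$\le(k!)^2$'': that quantity is of order $(k/e)^{2k}$, which exceeds $(3k)^k$ once $k$ is moderately large, so one must genuinely use the trade-off that a $p_=$ with many blocks admits few consistent $p_\sim$ above it and vice versa — which is what collapsing $\sum_l S(k,l)B_l$ into a sum of surjection counts accomplishes. The other care-points — that pruning and the move from complete to partial patterns cost only polynomial factors, and that \emph{Realizable} is really polynomial given the two-level matching construction — are routine once Lemmas~\ref{lem6} and~\ref{lem:2level} are in hand.
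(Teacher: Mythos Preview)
Your argument is correct, but the route differs from the paper's. The paper does not prove this theorem directly: it simply invokes the general $r$-level result (Theorem~\ref{thm:nested-algorithm}) at $r=2$, and that theorem rests on a tree-encoding count (Theorem~\ref{thm:count-nested-partitions}) which bounds the number of nested partitions in $r$ levels by $(r+1)^{k-1}k^{k-2}$ via Cayley's formula on edge-labelled spanning trees. Your approach instead counts the two-level objects directly as $\sum_{l} S(k,l)\,B_l$ and bounds this by the elementary chain $B_l\le l!$, $S(k,l)\,l!\le l^k$, $\sum_l l^k\le k^{k+1}\le (3k)^k$. Both routes land on $O^*((3k)^k)=O^*(2^{k\log(3k)})$. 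Your argument is more self-contained for the two-level case and avoids the tree encoding; the paper's argument buys uniformity in $r$, yielding $(r+1)^{k-1}k^{k-2}$ for arbitrary nesting depth with no change to the proof. Your correctness analysis (chaining Proposition~\ref{prop:4}, Lemma~\ref{lem:elig}, and Lemmas~\ref{lem6}--\ref{lem:2level}, together with the observation that the authorization pruning is only necessary and consistency is preserved by block-wise assignment of $p_\sim$) is essentially the same as what the paper sketches before stating the theorem and in the proof of Theorem~\ref{thm:nested-algorithm}.
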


%
%
%
%

\section{Nested Equivalence Relations}\label{sec:nested}

 Suppose we have a series of equivalence relations $\sim_1, \sim_2, \dots, \sim_r$, such that each equivalence relation is a refinement of the ones preceeding it, and a set of constraints $C_{\sim_q}$ for each equivalence relation $\sim_q$.
 Then we extend our approach as follows.\footnote{In reality, $r$ will be quite small and may be considered as a parameter alongside $k$.}
 For each equivalence relation $\sim_q$, we define a pattern $p_{\sim_q} = (x^q_1, \dots, x^q_r)$, where $x^q_i \in [k]$ for all $i \in [k]$.
 We will also write the pattern $p_{\sim_q}$ as $(p_{\sim_q}(1), \dots, p_{\sim_q}(r))$.
 We say that $p_{\sim_q}$ is a \emph{$\sim_q$-pattern} for a plan $\pi$ if $x^q_i=x^q_j \Leftrightarrow \pi(s_i) \sim_q \pi(s_j)$, for all $i,j \in [k]$.
 Given a plan $\pi$ and a $\sim_q$-pattern $p_{\sim_q}$ for $\pi$ for each $q \in [r]$, we define the tuple $p = (p_{\sim_1}, \dots, p_{\sim_r})$ to be a \emph{joint pattern} for $\pi$.
The algorithm now proceeds in a natural generalisation of the previously considered case where $q=2$ (see below),
but in order to analyse the running time more carefully we need to note some subtleties in the definitions of patterns and partitions.


\subsection{Joint Patterns and Nested Partitions}

Consider nested equivalence relations $\sim_1, \dots, \sim_r$ as above, 
and an instance of WSP with constraints $C_{\sim_1} \cup \dots \cup C_{\sim_r}$
where for each $i \in [r]$, $C_{\sim_i}$ contains constraints that are CI for $\sim_i$. 
We define a joint pattern $p=(p_{\sim_1}, \dots, p_{\sim_r})$ to be \emph{eligible} and \emph{realizable} 
in the natural way, extending the definitions used previously in this paper. 
Similarly, $p$ is \emph{consistent} if $x^{q+1}_i=x^{q+1}_j \Rightarrow x^q_i = x^q_j$ for all $q \in [r-1]$, $i, j \in [k]$.

As previously, it is easy to test in polynomial time whether a joint pattern is eligible,
and realizability can be tested via a generalisation of the approach described in Section~\ref{sec:real}; see below. 
Hence, the existence of an FPT algorithm (parameterized jointly by $k$ and $r$)
follows from the number of joint patterns being bounded, and from an algorithm  
for enumerating joint patterns (also given below). (The number of (not necessarily consistent) possible
joint patterns is clearly $k^{rk}$, which would also be the dominating term 
in a naive analysis of the running time.)

We briefly note that restricting our attention to consistent joint patterns does not improve this bound.
To see this, consider a plan $\pi$ where all steps are assigned to different $\sim_i$-equivalence classes
on all levels $i$, i.e., $\simmed{\pi}{\sim_1}(s) \neq \simmed{\pi}{\sim_1}(s')$ for all steps $s \neq s'$.
Then the number of consistent joint patterns corresponding to $\pi$ is exactly $(k!)^r=k^{\Theta(rk)}$,
since a different numbering scheme may be used at every level of the pattern. 

\newcommand{\cP}{\ensuremath{\mathcal P}}

For a better bound on the running time, we introduce some more terminology. 
Recall that a partition $\cP$ is a refinement of a partition $\cP'$ if they are partitions of the same ground set, and for every set $S \in \cP$ there is some set $S' \in \cP'$ such that $S \subseteq S'$.
A \emph{nested partition (in $r$ levels)} of a set $S$ is a tuple $\cP=(\cP_1, \ldots, \cP_r)$ of partitions of $S$ where $\cP_{i+1}$ is a refinement of $\cP_i$ for each $i \in [r-1]$. 
Thus a nested partition is essentially equivalent to a consistent joint pattern, except that no numbering for the partitions has been specified. We find that this has a significant impact on their number.

\begin{theorem} \label{thm:count-nested-partitions}
  Let $S$ be a set with $|S|=k$ and let $r$ be an integer. The number of nested partitions of $S$ in $r$ levels is at most $(r+1)^{k-1}k^{k-2}$.
\end{theorem}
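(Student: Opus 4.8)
The plan is to bound the number of nested partitions of $S$ in $r$ levels by setting up a bijection (or at least an injection) with a class of labelled structures that can be counted directly, and the natural candidate is a class of trees on roughly $k$ vertices with edge-labels in a set of size $r+1$. A nested partition $\cP = (\cP_1, \dots, \cP_r)$ with $\cP_1$ the full partition $\{S\}$ (we may assume this, adding at most a constant factor, or handle $\cP_1$ arbitrary separately) is exactly a rooted forest structure on the blocks: each block $B$ at level $i+1$ sits inside a unique block at level $i$, so the blocks across all levels form a forest under inclusion, with the level-$1$ blocks as roots. First I would encode such a nested partition as a single tree $T$ on the vertex set $S \cup \{*\}$ (adding one auxiliary root $*$) together with a labelling of each vertex by the level at which it "first splits off", i.e. assigning to each element of $S$ the largest $i$ such that $s$ is a singleton block already at level $i$ — more precisely, I would record for each element the level at which its block becomes a singleton, which is a value in $\{1, \dots, r\} \cup \{\infty\}$, giving essentially $r+1$ choices.

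The key steps, in order: (1) Observe that a nested partition in $r$ levels is determined by, and determines, the following data: a partition of $S$ into the level-$1$ blocks, and recursively a nested partition in $r-1$ levels of each level-$1$ block; this recursion is what we want to turn into a tree count. (2) Reformulate this as counting rooted trees: attach a root $*$, and for the tree structure use the fact that the number of labelled forests on $k$ vertices with a specified number of components relates to $k^{k-2}$-type counts (generalized Cayley formula), while the extra "at which level does each refinement happen" data contributes the $(r+1)^{k-1}$ factor since there are essentially $k-1$ "merge events" going from the finest partition (all singletons, conceptually level $r+1$) up to the coarsest, each of which can be assigned to one of $r$ levels (or, in the bijection, an edge of the tree gets a label in $[r]$, plus one more value to mark the actual split into singletons). (3) Check the arithmetic: we want the product to be $(r+1)^{k-1} k^{k-2}$, where $k^{k-2}$ is the Cayley count of labelled trees on $k$ vertices and $(r+1)^{k-1}$ is the count of ways to label the $k-1$ edges of such a tree with elements of a set of size $r+1$. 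So the cleanest route is an injection from nested partitions of $S$ in $r$ levels into the set of (labelled tree on $k$ vertices, edge-labelling by $[r+1]$) pairs.

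The injection itself: given $\cP$, for each pair of elements $s, s' \in S$ let $\ell(s,s')$ be the largest index $i$ such that $s$ and $s'$ lie in the same block of $\cP_i$ (with $\ell = 0$ if they are never together, but if $\cP_1 = \{S\}$ this is always $\geq 1$). This $\ell$ is an "ultrametric-like" function: for any three elements, the minimum of the three pairwise $\ell$-values is attained at least twice. Build a tree by a Kruskal-style / single-linkage argument: process levels from $r$ down to $1$; maintaining a partition refining down into the current blocks, each time two blocks merge when passing from level $i+1$ to level $i$, add an edge between a canonical representative of each (say the smallest-indexed element) and label that edge with $i$. This produces a spanning tree of $S$ with $k-1$ edges, each labelled by a value in $[r]$, and distinct nested partitions yield distinct labelled trees because the labelled tree recovers, for every pair $s,s'$, the value $\ell(s,s')$ as the minimum edge-label on the $s$–$s'$ path, and that in turn recovers every block of every $\cP_i$ as a connected component after deleting edges of label $< i$. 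Then $(r+1)^{k-1}k^{k-2}$ is an upper bound since even allowing one extra label value (value $0$, unused, or absorbing the case $\cP_1 \neq \{S\}$) only enlarges the target set. The main obstacle I expect is making the single-linkage construction genuinely canonical and injective — i.e., pinning down exactly which edges get added at each merge so that the map is well-defined and the inverse (reading off $\ell$ from path-minima and then reconstituting the $\cP_i$) is clean; the counting of edge-labelled labelled trees via Cayley's formula is then routine.
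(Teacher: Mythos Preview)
Your final approach---encoding a nested partition as an edge-labelled spanning tree on $S$ with labels in $\{0,1,\dots,r\}$, recovering each $\cP_i$ as the connected components of the edges with label $\geq i$, and counting via Cayley's formula---is exactly the paper's proof; the ultrametric/single-linkage framing and the path-minimum recovery of $\ell(s,s')$ are just a different vocabulary for the same construction. The ``main obstacle'' you anticipate is not really one: the tree need not be canonical, only well-defined (fix any rule for which edges to add at each merge), since injectivity follows from the recovery step alone, and the paper likewise makes arbitrary choices.
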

\begin{proof}
  We will describe nested partitions in terms of edge-labelled trees, such that distinct nested partitions yield distinct edge-labelled trees; the bound will follow.

  We construct a tree $T$ on vertex set $S$ bottom-up as follows. To begin, let $T_r$ be an arbitrary forest corresponding to the partition $\cP_r$, i.e., the partition of $S$ into connected components of $T_r$ is exactly the partition $\cP_r$. Let every edge of $T_r$ have label $r$. Note that some components of $T_r$ may be edgeless, i.e., consist of only a single step $s \in S$. 

  Next, for each $i \in [r-1]$ we iteratively define a forest $T_i$ from $T_{i+1}$ by adding new edges to $T_{i+1}$, with label $i$, until $T_i$ corresponds to the partition $\cP_i$ (in the same sense as previously). Note that this is possible since $\cP$ is a nested partition. Again, the precise choice of edges is arbitrary (subject to these specifications). 

  Finally, we complete $T_1$ into a tree $T$ by adding edges with label $0$. This yields a tree over $S$ with edges labelled by $r+1$ different labels. By Cayley's formula, there are $k^{k-2}$ distinct trees on $S$, and for each tree there are $(r+1)^{k-1}$ different edge labellings; hence the number of distinct labelled trees matches the claimed bound.

It only remains to show that distinct nested partitions yield distinct labelled trees. This follows since the nested partition can be recovered from the labelled tree: by construction, the partition $\cP_i$ corresponds to the forest containing all edges of $T$ with label $j \geq i$. 

The result follows.
\end{proof}

In the rest of this section, we show how to give an FPT algorithm which enumerates distinct nested partitions.
(This will be very similar to the results of Sections~\ref{sec:pp}--\ref{sec:main}; indeed, it is not difficult
to see that the enumeration strategy shown in Algorithm~\ref{alg:recursion} meets this requirement.)

The discussion will focus on consistent joint patterns, since this notion matches the design of the algorithm more closely; 
we will return to the notion of nested partitions when we provide the running time bound. 


\subsection{Checking Realizability}

Let us discuss how to check realizability of a joint pattern.
Note that if $p$ is the joint pattern for a plan, then necessarily $p$ is consistent;
hence we assume that $p  = (p_{\sim_1}, \dots, p_{\sim_r})$ is a consistent pattern. 
 
 Rather than defining two layers of bipartite graphs in order to check realizability, we define $r$ layers. 
 For notational convenience, let $\sim_0$ be the trivial equivalence relation for which all users are in the same class, and let $p_{\sim_0}$ be a pattern matching every task to the same label.
 We assume that $\sim_r$ is the relation $=$ (if $\sim r$ is not the relation $=$, we need to introduce $=$ as a new relation $\sim_{r+1}$ and proceed with a larger value of $r$).

 For any $q \in \{0, \dots, r\}$, and any label $x$ appearing in $p_{\sim_q}$, let $S^q_x = \{s_i \in S: x^q_i = x\}$. 
 For $q  <r$, let ${\cal S}^q_x$ be the set of all $S^{q+1}_y$ for which $S^{q+1}_y \subseteq S^q_x$.
 (Note that as $p$ is consistent, for any labels $x,y$, either $S^{q+1}_y \subseteq S^q_x$ or $S^{q+1}_y \cap S^q_x = \emptyset$.)
 Let $\simmed{u}{\sim_q}$ be an equivalence class with respect to $\sim_q$.
 For any such equivalence class, $\simmed{(\simmed{u}{\sim_q})}{\sim_{q+1}}$ denotes the set of all equivalence classes of $\simmed{u}{\sim_q}$ with respect to $\sim_{q+1}$, i.e. the set of all classes $\simmed{v}{\sim_{q+1}}$ such that $\simmed{v}{\sim_{q+1}} \subseteq \simmed{u}{\sim_q}$.
 Then we define a \emph{$q$th-level bipartite graph} as follows:

 \begin{definition}
  Given a joint pattern $p = (p_{\sim_1} = ((x^1_1, \dots, x^1_k), \dots, p_{\sim_r} = (x^r_1, \dots, x^r_k))$, an integer $q \in [r]$, a set $S^{q-1}_x = \{s_i \in S: x^{q-1}_i = x\}$ and an equivalence class $\simmed{u}{\sim_{q-1}} \in \simmed{U}{\sim_{q-1}}$, the 
  \emph{$q$th-level bipartite graph $G_{S^{q-1}_x\simmed{u}{\sim_{q-1}}}$} is defined as follows:
  Let the vertex set of $G_{S^{q-1}_x\simmed{u}{\sim_{q-1}}}$ be ${\cal S}^{q-1}_x \cup \simmed{(\simmed{u}{\sim_{q-1}})}{\sim_{q}}$.
  For each $S^q_y \in {\cal S}^{q-1}_x, \simmed{v}{\sim_q} \in \simmed{(\simmed{u}{\sim_{q-1}})}{\sim_{q}}$, 
  we have an edge between $S^q_y$ and $\simmed{v}{\sim_q}$ if and only if 
  there exists an authorized partial plan $\pi_{q,y}: S^q_y \rightarrow \simmed{v}{\sim_q}$ such that $p_{q'}|_{S^q_y}$ is a $\sim_{q'}$-pattern for $\pi_{q,y}$ for each $q' \ge q$.  
 \end{definition}

 Similarly to previous lemmas, we can prove the following result:
 
 \begin{lemma}\label{lem:generalizedGraph}
  The following conditions are equivalent:
  (i) There exists an authorized partial plan $\pi_{q,y}: S^q_y \rightarrow \simmed{v}{\sim_q}$ such that $p_{q'}|_{S^q_y}$ is a $\sim_{q'}$-pattern for $\pi_{q,y}$ for each $q' \ge q$; and (ii)
    $G_{S^q_y\simmed{v}{\sim_q}}$ has a matching covering ${\cal S}^q_y$.
\end{lemma}

 Observe that (assuming $\sim_r$ is the relation $=$) if $q = r$, then  $\simmed{v}{\sim_q} = \{v\}$ and there is an edge between $S^q_y$ and $\simmed{v}{\sim_q}$ if and only if $v$ is authorized for all steps in $S^q_y$.
 Therefore $G_{S^{r-1}_x\simmed{u}{\sim_{r-1}}}$ can be constructed in polynomial time, and a matching saturating ${\cal S}^{r-1}_x$ can be found in polynomial time if one exists.
 By Lemma \ref{lem:generalizedGraph}, we can use graphs of the form $G_{S_x\simmed{u}{\sim_q}}$ to construct graphs of the form $G_{S_{y}\simmed{v}{\sim_{q-1}}}$.
 Thus, in polynomial time (for fixed $r$) we can decide whether  
 there exists an authorized partial plan $\pi_{0,y}: S^0_y \rightarrow \simmed{v}{\sim_0}$ such that $p_{q'}|_{S^0_y}$ is a $\sim_{q'}$-pattern for $\pi_{0,y}$ for each $q' \ge 0$.
 As $S^0_y = S$ and $\simmed{v}{\sim_0} = U$, this lets us decide whether there exists a complete, valid plan $\pi$ corresponding to an eligible joint pattern $p$.


\subsection{The Algorithm and Running Time}

The algorithm can now be constructed very similarly as in Section~\ref{sec:main}. 
We begin by defining an empty partial joint pattern $p=(0^k, \dots, 0^k)$, then 
as in Algorithm~\ref{alg:recursion} we construct a recursive backtracking algorithm
to complete $p$ into a complete joint pattern (where no entry is $0$). 

This is done in a bottom-up manner. Let $p'=(p_{\sim_1}', \ldots, p_{\sim_r}')$ be a partial joint pattern.
If $p'$ is complete, then we proceed to test realizability as above. Otherwise, 
let $i \leq r$ be the largest integer such that $p_{\sim_i}'$ is incomplete,
and let $j \in [k]$ be such that $p_{\sim_i}'(j)=0$. Let $k'=\max_{j' \in [k]} p_{\sim_i}'(j')+1$,
and let $S_{i,j}=\{j' \in [k]: p'_{\sim_i}(j') = p'_{\sim_i}(j)\}$. (If $i=r$, then we simply
define $S_{i,j}=\{j\}$.) Then for every $a \in [k']$ we perform the following procedure: 
fix $p'_{\sim_i}(j')=a$ for every $j' \in S_{i,j}$; check if the resulting partial pattern $p'_{\sim_i}$ 
is ineligible (i.e., if some constraint of $C_i$ whose scope intersects $S_{i,j}$ has become violated); 
and if not, make a recursive call with the resulting joint pattern $p'$.

We claim that this is a correct algorithm, which enumerates joint patterns which are consistent by the 
specification of the set $S_{i,j}$, and which furthermore enumerates only distinct nested partitions
thanks to the choice of $k'$. 

\begin{theorem} \label{thm:nested-algorithm}
  WSP with nested class-independent constraints in $r$ levels and with $k$ steps 
  is FPT with a running time of 
  $O^*(2^{k \log((r+1)k)})$.
\end{theorem}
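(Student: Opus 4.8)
The plan is to combine the counting bound from Theorem~\ref{thm:count-nested-partitions} with the correctness and polynomial-time analysis of the individual steps of the backtracking algorithm sketched just above the statement. First I would verify the three structural claims about the algorithm: (a) every complete joint pattern it produces is consistent, because whenever we assign label $a$ to position $j$ at level $i$ we simultaneously assign $a$ to all $j'$ in the same level-$i$ block $S_{i,j}$, so blocks at level $i$ never split relative to blocks at the coarser levels already fixed; (b) it only enumerates distinct nested partitions, because the restriction $a \in [k']$ with $k' = \max p'_{\sim_i} + 1$ forces the labels used at level $i$ to be $\{1,\dots,t\}$ for some $t$ in the order the blocks are first touched, a canonical numbering, so two runs producing the same nested partition would produce the same labelled joint pattern; and (c) it enumerates \emph{all} distinct nested partitions that are eligible — here I would argue that pruning an ineligible partial pattern is safe, since by the CI property (Proposition~\ref{prop:ci} and its nested analogue) any completion of a partial pattern violating a constraint of $C_i$ whose scope lies within the already-fixed block $S_{i,j}$ is itself ineligible.

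Next I would bound the number of leaves of the recursion tree. By (a) and (b), leaves are in bijection with a subset of the distinct nested partitions of $S$ in $r$ levels (after appending $=$ as the finest level, increasing $r$ by at most one, which is absorbed into the $O^*$ and the $(r+1)$ base), so Theorem~\ref{thm:count-nested-partitions} gives at most $(r+1)^{k-1}k^{k-2} \le ((r+1)k)^k = 2^{k\log((r+1)k)}$ leaves. The recursion tree has depth at most $rk$ and branching at most $k+1$ at each node, but more carefully the number of internal nodes is polynomially related to the number of leaves times $O(rk)$, so the total number of nodes visited is $O^*(2^{k\log((r+1)k)})$. At each node the work done is polynomial: the eligibility check is polynomial by the standing assumption (stated before Example~\ref{ex1}) that every constraint is checkable in time polynomial in $n$, and at each leaf the realizability test runs in polynomial time for fixed $r$, by the layered-bipartite-graph construction together with Lemma~\ref{lem:generalizedGraph} (building $G_{S^{r-1}_x\simmed{u}{\sim_{r-1}}}$ in polynomial time at the bottom level and then using Lemma~\ref{lem:generalizedGraph} inductively to build the graphs at each coarser level, each a polynomial-size bipartite graph in which a maximum matching is found in polynomial time). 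Multiplying, the running time is $O^*(2^{k\log((r+1)k)})$.

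The main obstacle I expect is (b) together with the matching of the recursion-tree leaf count to the bound of Theorem~\ref{thm:count-nested-partitions}: one must be careful that the canonical numbering enforced by the $[k']$ restriction at level $i$ is compatible across all levels, i.e., that the order in which level-$i$ blocks are first assigned labels is well-defined and that no nested partition is produced under two different label encodings, and conversely that the pruning in (c) does not discard a nested partition that would have eligible completions. A secondary point to get right is that the realizability subroutine only runs at leaves but its bipartite graphs may be of size depending on $n$ (their vertex sets include equivalence classes of users and users themselves), so one needs the polynomiality of Hopcroft--Karp-type matching together with the observation that there are only $O(k)$ graphs per level and $r+1$ levels; this gives polynomial-in-$N$ time for fixed $r$, and since $r$ is treated as a parameter alongside $k$ this is exactly what is needed for the FPT claim.
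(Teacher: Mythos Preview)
Your proposal is correct and follows essentially the same approach as the paper: establish correctness of the backtracking enumeration (no false positives or negatives), observe that the leaves of the recursion tree are in bijection with distinct nested partitions via the canonical $[k']$-numbering, invoke Theorem~\ref{thm:count-nested-partitions} to bound the leaf count by $(r+1)^{k-1}k^{k-2}$, and note that the per-node work (eligibility check and, at leaves, the layered-matching realizability test via Lemma~\ref{lem:generalizedGraph}) is polynomial. One small clarification: the paper already assumes $\sim_r$ is the relation $=$, so there is no need to append an extra level---the $(r+1)$ in the bound comes directly from the edge-label count in Theorem~\ref{thm:count-nested-partitions}, not from an increment of $r$.
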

\begin{proof}
  Clearly, since eligibility and authorization of every proposed joint pattern
  is verified explicitly, the algorithm gives no false positives, i.e., it never
  reports the existence of a valid plan for an unsatisfiable instance. 
  The opposite also holds: Assume that the instance allows for a valid plan $\pi$. 
  Then at every recursion point, corresponding to the specification of a value $p'_{\sim_i}(j)$, 
  there is exactly one value of $a$ consistent with $\pi$ (either $k'=1$ in which case
  there is no choice; or $p_{\sim_i}$ places $s_j$ in the same equivalence class
  as some previously specified step $s_j'$; or $s_j$ must be placed in a new equivalence class 
  and we let $a=k'$). It is also clear that this recursive path is not aborted. 
  Hence the process results in a complete joint pattern $p$ corresponding to $\pi$,
  which is eligible by assumption, and for which some authorized complete plan $\pi'$
  is subsequently computed. 

  To bound the running time, we argue very similarly to show that the number of leaves
  is bounded by the number of distinct nested partitions. Clearly, for an upper bound
  on the running time we may assume that no recursive branch is aborted (i.e.,
  every possible plan is eligible). Then we find as above that for every 
  nested partition $\cP$, we can trace exactly one path from the root of the calling tree
  to a leaf, where at every point there is exactly one value $p'_{\sim_i}(j)=a$
  that is consistent with $\cP$. We also find that every leaf of the calling tree,
  corresponding to a complete joint pattern $p$, corresponds to only exactly one 
  nested partition. Hence Theorem~\ref{thm:count-nested-partitions} bounds the number
  of leaves of the calling tree by $(r+1)^{k-1}k^{k-2} = 2^{(k-1)\log(r+1) + (k-1)\log k} = O^*(2^{k \log((r+1)k)})$. The total running time
  is bounded by a polynomial factor times this number; hence the result follows.
\end{proof}

Finally, we note that for modest values of $r$, specifically $r=k^{o(1)}$, then this bound
can be written as $O^*(2^{(k \log k)(1+o(1))})$, i.e., the overhead due to $r$ is not ``visible''
in the exponent until $r=k^{\Omega(1)}$.

Recently, it was shown that (under a standard complexity assumption)
even the basic case of WSP with only UI constraints admits no algorithm 
with a running time of $O^*(2^{(1-\varepsilon)k \log k})$ for any $\varepsilon>0$~\cite{GutinW15WSParXiv}. 
Hence for cases where $r=k^{o(1)}$, the bound in Theorem~\ref{thm:nested-algorithm}
matches the lower bound (up to lower-order terms in the exponent).

\section{Algorithm Implementation and Computational Experiments}\label{sec:experiments}

There can be a huge difference between an algorithm in principle and its actual implementation as a computer code. For example, see \cite{ChKl10,MyKo11}.
We have implemented the new pattern-backtracking FPT algorithm and a reduction to the pseudo-Boolean satisfiability (PB SAT) problem in C++, using SAT4J \cite{BePa10} as a pseudo-Boolean SAT solver.
Reductions from WSP constraints to PB ones were done similarly to those in \cite{CoCrGaGuJo14c,JOCO2014,KaGaGu}. Our FPT algorithm extends the pattern-backtracking framework of \cite{KaGaGu} in a nontrivial way; see below. 

In this section we first describe some tweaks and heuristics used by the algorithm (with no known impact on its theoretical performance), then we describe a series of experiments that we ran to test the performance of our FPT algorithm against that of SAT4J.
Due to the difficulty of acquiring real-world workflow instances, we generate and use synthetic data to test our new FPT algorithm and reduction to the PB SAT problem (as in similar experimental studies~\cite{JOCO2014,KaGaGu,WaLi10}).
All our experiments use a MacBook Pro computer having a 2.6 GHz Intel Core i5 processor, 8 GB 1600 MHz DDR3 RAM and running Mac OS X 10.9.5.

We generate a number of random WSP instances using not-equals (i.e, constraints of the form $(s,s',\neq)$),  equivalence and non-equivalence constraints (i.e., constraints of the types $(s,s',\sim)$ and  $(s,s',\nsim)$), and at-most constraints.
An {\em at-most constraint} is a UI constraint that restricts the number of users that may be involved in the execution of a set of steps.
It is, therefore, a form of cardinality constraint and imposes a loose form of ``need-to-know'' constraint on the execution of a workflow instance, which can be important in certain business processes.
An at-most constraint may be represented as a tuple $(t,Q,\leqslant)$, where $Q \subseteq S$, $1\leqslant t \leqslant |Q|$, and is satisfied by any plan that allocates no more than $t$ users in total to the steps in $Q$.
In all our at-most constraints $t=3$ and $|Q|=5$ as in
\cite{JOCO2014,KaGaGu}.

\subsection{Further Implementation Details}

The FPT algorithm and pattern generation of \cite{JOCO2014} have to assume a fixed ordering $s_1,\ldots ,s_k$ of steps in $S$, whereas the pattern-backtracking framework we use allows us to consider the steps as arbitrarily permuted and to browse the search space of patterns more efficiently. Our algorithm uses a heuristic to decide which zero-valued coordinate $x_i$ (when $p_=$ is constructed) or $y_i$  (when $p_\sim$ is constructed) should be considered next. The heuristic simply chooses a zero-valued coordinate of maximum weight, but the way to compute weights of zero-valued coordinates depends on the type of constraints in the WSP instance. 

For the types of constraints used in our computational experiments, the weights are computed as follows: the weight of $x_i$ is the total number of steps involved in user-independent constraints containing $s_i$, and the weight of $y_i$ is the number of non-equivalence constraints $(s,s',\not\sim)$ containing $s_i$ plus ten times the number of equivalence constraints $(s,s',\sim)$ constraining $s_i$.
The intuition behind this is as follows. For user-independent constraints, a step involved in user-independent constraints containing the largest number of steps in total reduces the pattern search space more effectively. Similarly, for class-independence constraints, a step involved in a larger number of constraints reduces the search space more effectively, with equivalence constraints having a much stronger influence on the search space reduction. In other words, we choose a ``more constrained" step in each case first. 

The procedure {\em Realizable}($W, p$), used to test realizability by finding matchings covering one partite set of the bipartite graph, uses a modified version of the Hungarian algorithm and data structures from \cite{KoKr04} in combination with some simple speed-ups and Proposition~1 of \cite{KaGaGu}.

\subsection{Experimental Parameters and Instance Generation}

We summarize the parameters we use for our experiments in Table~\ref{tbl:experimental-parameters}.
Values of $k$, $n$ and $r$ were chosen that seemed appropriate for real-world workflow specifications.
The values of the other parameters were determined by preliminary experiments designed to identify ``challenging'' instances of WSP: that is, instances that were neither very lightly constrained nor very tightly constrained.
Informally, it is relatively easy to determine that lightly constrained instances are satisfiable and that tighly constrained instances are unsatisfiable.
Thus the instances we use in our experiments are (very approximately) equally likely to be satisfiable or unsatisfiable.
In particular, by varying the numbers of at-most constraints and constraints of the form $(s,s',\nsim)$, we are able to generate a set of instances with the desired characteristics (as shown by the results in Table~\ref{tbl:big}).

\begin{table}[!t]\centering 

\caption{Parameters used in our experiments}\label{tbl:experimental-parameters}
 \begin{tabular}{|l|r|r|}
 \hline
  \multicolumn{2}{|l|}{\bf Parameter} & \bf Values \\
 \hline
  \multicolumn{2}{|l|}{Number of steps $k$} & $20,25,30$ \\
 \hline
  \multicolumn{2}{|l|}{Number of users $n$} & $10k$ \\
 \hline
  \multicolumn{2}{|l|}{Number of user equivalence classes $r$} & $2k$ \\
 \hline
    & $k=20$ & $20,25$ \\ 
  Number of constraints  $(s,s',\ne)$ & $k=25$ & $25,30$ \\ 
    & $k=30$ & $30,35$ \\
 \hline
    & $k = 20$ & $0$ \\
  Number of constraints $(s,s',\sim)$ & $k = 25$ & $1$ \\
    & $k = 30$ & $2$ \\
 \hline
    & $k = 20$ & $10,15,20,25,30$ \\
  Number of constraints $(s,s',\nsim)$ & $k = 25$ & $15,20,25,30,35$ \\
    & $k = 30$ & $20,25,30,35,40$ \\
 \hline
    & $k = 20$ & $10,15,20,25,30,35,40$ \\
  Number of at-most constraints & $k = 25$ & $15,20,25,30,35,40,45$ \\
    & $k = 30$ & $20,25,30,35,40,45,50$ \\
 \hline
 \end{tabular}
\end{table}

A constraint $(s,s',\nsim)$ implies the existence of a constraint $(s,s',\ne)$, so we do not vary the number of not-equals a great deal (in contrast to existing work in the literature~\cite{JOCO2014}).
Informally, a constraint $(s,s',\sim)$ reduces the difficulty of finding a valid plan.
Thus, given our desire to investigate challenging instances, we do not use very many of these constraints.

All the constraints, authorizations, and equivalence classes of users are generated for each instance separately, uniformly at random. 
The random generation of authorizations, not-equals, and at-most constraints uses existing techniques~\cite{JOCO2014}. 
The generation of equivalence and non-equivalence constraints has to be controlled to ensure that an instance is not trivially unsatisfiable.
In particular, we must discard a constraint of the form $(s,s',\nsim)$ if we have already generated a constraint of the form $(s,s',\sim)$.
The equivalence classes of the user set are generated by enumerating the user set and then splitting the list into contiguous sublists.
The number of elements in each sublist varies between $3$ and $7$ (chosen uniformly at random and adjusted, where necessary, so that the total number of members in the $r$ sub-lists is $n$).

\subsection{Results and Evaluation}

We adopt the following labelling convention for our test instances: $a.b.c.d$ denotes an instance with $a$ not-equals constraints, $b$ at-most constraints, $c$ equivalence constraints, and $d$ non-equivalence constraints (as used in the first and fourth columns of Table~\ref{tbl:big}, for instances with $k=25$ and $k=30$, respectively).
In our experiments we compare the run-times and outcomes of SAT4J (having reduced the WSP instance to a PB SAT problem instance) and our FPT algorithm, which we will call PBA4CI (pattern-based algorithm for class-independent constraints).
Table~\ref{tbl:big} shows some detailed results of our experiments (the results for $k=20$
were excluded for reasons of space).
We record whether an instance is solved, indicating a satisfiable instance with a `Y' and an unsatisfiable instance with a `N'; instances that were not solved are indicated by a question mark.
PBA4CI reaches a conclusive decision (Y or N) for every test instance, whereas SAT4J fails to reach such a decision for some instances, typically because the machine runs out of memory.
The table also records the time (in seconds) taken for the algorithms to run on each instance.
We would expect that the time taken to solve an instance would depend on whether the instance is satisfiable or not, and this is confirmed by the results in the table.

\begin{table}[!t]\centering\setlength{\tabcolsep}{3pt}\setlength{\extrarowheight}{-1pt}
\caption{Results for $k = 25$ and $30$. Time in seconds. Y,N,? mean satisfied, unsatisfied, unsolved.} \label{tbl:big}
\begin{tabular}{|c|rr|rr||c|rr|rr|}
\hline
\multicolumn{1}{|c|}{\bf Instance} & \multicolumn{2}{c|}{\bf SAT4J} & \multicolumn{2}{c||}{\bf PBA4CI} & \multicolumn{1}{c|}{\bf Instance} & \multicolumn{2}{c|}{\bf SAT4J} & \multicolumn{2}{c|}{\bf PBA4CI} \\
\hline
\multicolumn{5}{|c||}{$k = 25$} & \multicolumn{5}{c|}{$k = 30$} \\
\hline
25.15.1.15 & Y & 2.62 & Y & 2.464 &30.20.2.20 & Y & 2.72 & Y & 50.804\\
25.20.1.15 & Y & 22.38 & Y & 0.010 &30.25.2.20 & Y & 271.78 & Y & 2.323\\
25.25.1.15 & Y & 11.03 & Y & 0.010 &30.30.2.20 & ? & 2,141.60 & Y & 2.946\\
25.30.1.15 & Y & 35.54 & Y & 0.040 &30.35.2.20 & ? & 2,250.02 & N & 0.412\\
25.35.1.15 & N & 1,439.94 & N & 0.075 &30.40.2.20 & ? & 1,942.57 & N & 2.238\\
25.40.1.15 & ? & 2,088.06 & N & 0.033 &30.45.2.20 & ? & 2,198.02 & N & 2.171\\
25.45.1.15 & Y & 113.37 & Y & 0.022 &30.50.2.20 & ? & 2,580.81 & N & 0.494\\
\hline
25.15.1.20 & Y & 1.52 & Y & 111.799 &30.20.2.25 & Y & 4.18 & Y & 237.604\\
25.20.1.20 & Y & 7.77 & Y & 0.024 &30.25.2.25 & Y & 76.41 & Y & 0.789\\
25.25.1.20 & Y & 297.39 & Y & 0.065 &30.30.2.25 & ? & 2,288.07 & N & 0.401\\
25.30.1.20 & ? & 2,273.56 & N & 0.033 &30.35.2.25 & Y & 1,364.66 & Y & 0.238\\
25.35.1.20 & Y & 48.29 & Y & 0.067 &30.40.2.25 & ? & 2,383.92 & N & 0.775\\
25.40.1.20 & N & 105.48 & N & 0.045 &30.45.2.25 & ? & 1,743.87 & N & 0.394\\
25.45.1.20 & ? & 2,105.61 & N & 0.031 &30.50.2.25 & ? & 2,385.39 & N & 0.218\\ 
\hline
25.15.1.25 & Y & 14.40 & Y & 0.014 &30.20.2.30 & Y & 35.40 & Y & 0.071\\
25.20.1.25 & Y & 80.25 & Y & 0.021 &30.25.2.30 & Y & 9.37 & Y & 1.063\\
25.25.1.25 & ? & 2,284.78 & N & 0.023 &30.30.2.30 & N & 1,632.51 & N & 0.347\\
25.30.1.25 & N & 442.91 & N & 0.237 &30.35.2.30 & Y & 803.50 & Y & 0.029\\
25.35.1.25 & ? & 2,188.01 & N & 0.060 &30.40.2.30 & ? & 2,022.71 & N & 0.981\\
25.40.1.25 & ? & 2,293.77 & N & 0.043 &30.45.2.30 & ? & 1,902.84 & N & 1.501\\
25.45.1.25 & ? & 2,041.02 & N & 0.144 & 30.50.2.30 & ? & 1,730.93 & N & 0.467\\
\hline
25.15.1.30 & Y & 3.22 & Y & 0.011 & 30.20.2.35 & Y & 24.12 & Y & 0.453\\
25.20.1.30 & Y & 240.59 & Y & 0.014 & 30.25.2.35 & Y & 456.51 & Y & 0.085\\
25.25.1.30 & Y & 66.74 & Y & 0.050 & 30.30.2.35 & N & 1,817.76 & N & 1.088\\
25.30.1.30 & ? & 2,301.75 & N & 0.088 & 30.35.2.35 & ? & 1,949.77 & N & 0.111\\
25.35.1.30 & N & 1,562.30 & N & 0.023 & 30.40.2.35 & ? & 2,115.32 & N & 0.551\\
25.40.1.30 & ? & 2,332.07 & N & 0.127 & 30.45.2.35 & ? & 1,535.57 & N & 0.118\\
25.45.1.30 & N & 950.25 & N & 0.040 & 30.50.2.35 & ? & 1,647.41 & N & 0.454\\
\hline
25.15.1.35 & Y & 10.57 & Y & 0.014 & 30.20.2.40 & ? & 3,088.54 & N & 0.729\\
25.20.1.35 & N & 218.70 & N & 0.166 & 30.25.2.40 & ? & 1,746.81 & Y & 0.542\\
25.25.1.35 & Y & 37.87 & Y & 0.012 & 30.30.2.40 & ? & 2,350.01 & Y & 0.949\\
25.30.1.35 & ? & 2,421.30 & N & 0.054 & 30.35.2.40 & ? & 1,857.27 & N & 0.576\\
25.35.1.35 & N & 1,524.68 & N & 0.022 & 30.40.2.40 & ? & 1,938.63 & N & 0.221\\
25.40.1.35 & N & 1,001.67 & N & 0.028 & 30.45.2.40 & ? & 2,159.50 & N & 0.209\\
25.45.1.35 & ? & 1,974.05 & N & 0.034 & 30.50.2.40 & ? & 1,815.15 & N & 0.337\\
\hline
\end{tabular}
\end{table}

In total, the experiments cover $210$ randomly generated instances, $70$ instances for each number of steps, $k\in\{20,25,30\}$. PBA4CI successfully solves all of the instances, while SAT4J fails on almost 40\% of the instances (mostly unsatisfiable ones). In terms of CPU time, SAT4J is more efficient only on $5$ instances ($2.4\%$) in total: $1$ for $20$ steps, $1$ for $25$ steps, and $3$ for $30$ steps, all of which are lightly constrained. 
For these instances PBA4CI has to generate a large number of patterns in the search space before it finds a solution. 

\begin{table}[!tb]\centering\setlength{\tabcolsep}{3pt}\setlength{\extrarowheight}{-1pt}
\caption{Summary statistics for $k \in \set{20,25,30}$}\label{tbl:overall}
\begin{tabular}{|r|r|r|r|r|r|}
\cline{3-6}
  \multicolumn{2}{c}{} & \multicolumn{2}{|c|}{\bf SAT4J} & \multicolumn{2}{c|}{\bf PBA4CI} \\
\hline
{$k$} & {Result} & {Count} & {Mean Time} & {Count} & {Mean Time} \\
\hline
20 & Y & 32 & 27.25 & 32 & 0.11  \\
   & N & 29\,(9) & 1,538.65 & 38 & 0.01 \\
     \cline{3-6}
   & Total & 61\,(9) & 847.72 & 70 & 0.06\\
\hline
\hline
25 & Y & 28 & 61.86 & 28 & 4.12\\
   & N & 15\,(27) & 1,719.31 & 42 & 0.07\\
     \cline{3-6}
   & Total & 43\,(27) & 1,056.33 & 70 & 1.69 \\      
\hline
\hline
30 & Y & 18\,(4) & 693.53 & 22  & 14.80\\
   & N & 6\,(42) & 2,003.76 & 48 & 0.84\\
     \cline{3-6}
   & Total & 24\,(46) & 1,591.97 & 70 & 5.23 \\      
\hline
\end{tabular}
\end{table}

Overall, PBA4CI is clearly more effective and efficient than SAT4J on these instances. 
Table~\ref{tbl:overall} shows the summary statistics for all the experiments. 
The numbers of unsolved instances by SAT4J are indicated in parenthesis. 
For average CPU time values, we assume that the running time on the unsolved instances can be considered as a lower bound on the time required to solve them. Therefore average time values in Table~\ref{tbl:overall} take into consideration unsolved instances for SAT4J: they are estimated lower bounds on its average time performance.
As the number of steps $k$ increases, SAT4J fails more frequently and is unable to reach a conclusive decision for more than $65\%$ of instances when $k = 30$, some of which are satisfiable.
However, SAT4J is clearly more efficient (and effective) on satisfiable instances than on the unsatisfiable ones, while for PBA4CI the converse is true. This can be explained by very different search strategies used by the solvers.

\section{Conclusion}\label{sec:con}

We have introduced the concept of a class-independent constraint, which significantly generalizes user-independent constraints and substantially extends the range of real-world business requirements that can be modelled.  
We have designed an FPT algorithm for WSP with class-independent constraints.
Our computational results demonstrate that our FPT algorithm is useful in practice for WSP with class-independent constraints, in particular for WSP instances that are too hard for SAT4J.

We also outlined a generalization of our approach, and gave a more careful analysis of the worst-case complexity compared to the previous version of this paper~\cite{CrGaGuJo}.

\vspace{2mm}

\noindent{\bf Acknowledgement.}  This research was partially supported by an EPSRC grant EP/K005162/1. The FPT algorithm's executable code and experimental data set are publicly available~\cite{Gagarin2015}.


\begin{thebibliography}{10}

\bibitem{ansi-rbac04}
American National Standards Institute.
\newblock {\em {ANSI INCITS} 359-2004 for Role Based Access Control}, 2004.

\bibitem{ChKl10}
T.~Bartz-Beielstein, M.~Chiarandini, L.~Paquete, and M.~Preuss, editors.
\newblock {\em {Experimental Methods for the Analysis of Optimization
  Algorithms}}.
\newblock Springer, 2010.

\bibitem{BaBuKa14}
D.~A. Basin, S.~J. Burri, and G.~Karjoth.
\newblock Obstruction-free authorization enforcement: Aligning security and
  business objectives.
\newblock {\em J. Comput. Security}, 22(5):661--698, 2014.

\bibitem{BrNa89}
D.~F.~C. Brewer and M.~J. Nash.
\newblock The {Chinese Wall} security policy.
\newblock In {\em IEEE Symposium on Security and Privacy}, pages 206--214. IEEE
  Computer Society, 1989.

\bibitem{CoCrGaGuJo14c}
D.~Cohen, J.~Crampton, A.~Gagarin, G.~Gutin, and M.~Jones.
\newblock Engineering algorithms for workflow satisfiability problem with
  user-independent constraints.
\newblock In {\em Frontiers in Algorithmics 2014}, volume 8497 of {\em Lect.
  Notes Comput. Sci.}, pages 48--59. Springer, 2014.

\bibitem{CoCrGaGuJo14}
D.~Cohen, J.~Crampton, A.~Gagarin, G.~Gutin, and M.~Jones.
\newblock Iterative plan construction for the workflow satisfiability problem.
\newblock {\em J. Artif. Intel. Res.}, 51:555--577, 2014.

\bibitem{JOCO2014}
D.~Cohen, J.~Crampton, A.~Gagarin, G.~Gutin, and M.~Jones.
\newblock Algorithms for the workflow satisfiability problem engineered for
  counting constraints.
\newblock {\em J. Comb. Optim.}, to appear, 2015.
\newblock (DOI: 10.1007/s10878-015-9877-7).

\bibitem{Cr05}
J.~Crampton.
\newblock A reference monitor for workflow systems with constrained task
  execution.
\newblock In E.~Ferrari and G.-J. Ahn, editors, {\em SACMAT}, pages 38--47.
  ACM, 2005.

\bibitem{CrGaGuJo}
J.~Crampton, A.~V. Gagarin, G.~Gutin, and M.~Jones.
\newblock On the workflow satisfiability problem with class-independent
  constraints.
\newblock {\em CoRR}, abs/1504.03561v1, 2015.

\bibitem{CrGuYe13}
J.~Crampton, G.~Gutin, and A.~Yeo.
\newblock On the parameterized complexity and kernelization of the workflow
  satisfiability problem.
\newblock {\em ACM Trans. Inf. Syst. Secur.}, 16(1):4, 2013.

\bibitem{Gagarin2015}
A.~Gagarin, J.~Crampton, G.~Gutin, and M.~Jones.
\newblock Implementation of the pattern-backtracking FPT algorithm and
  experimental dataset for the WSP with class-independent constraints, Aug
  2015.
\newblock \url{http://dx.doi.org/10.6084/m9.figshare.1502692}.

\bibitem{GutinW15WSParXiv}
G.~Gutin and M.~Wahlstr{\"o}m.
\newblock Tight lower bounds for the workflow satisfiability problem based on
  the strong exponential time hypothesis, 2015.
\newblock Preprint available at {\url{http://arxiv.org/abs/1508.06829}}.

\bibitem{KaGaGu}
D.~Karapetyan, A.~Gagarin, and G.~Gutin.
\newblock Pattern backtracking algorithm for the workflow satisfiability
  problem.
\newblock In {\em Frontiers in Algorithmics 2015}, volume 9130 of {\em Lect.
  Notes Comput. Sci.}, pages 138--149. Springer, 2015.

\bibitem{KoKr04}
W.~Kocay and D.~Kreher.
\newblock {\em {Graphs, Algorithms, and Optimization}}.
\newblock Chapman \& Hall/CRC Press, 2004.

\bibitem{BePa10}
D.~Le~Berre and A.~Parrain.
\newblock The {SAT4J} library, release 2.2.
\newblock {\em J. Satisf. Bool. Model. Comput.}, 7:59--64, 2010.

\bibitem{MyKo11}
W.~Myrvold and W.~Kocay.
\newblock Errors in graph embedding algorithms.
\newblock {\em J. Comput. Syst. Sci.}, 77(2):430--438, 2011.

\bibitem{WaLi10}
Q.~Wang and N.~Li.
\newblock Satisfiability and resiliency in workflow authorization systems.
\newblock {\em ACM Trans. Inf. Syst. Secur.}, 13(4):40, 2010.

\end{thebibliography}
\end{document}